
\documentclass[letterpaper,10pt,conference]{ieeeconf}  
\IEEEoverridecommandlockouts                              
\overrideIEEEmargins
\pdfoutput=1

\usepackage{xspace}
\usepackage[english]{babel}
\usepackage[latin1]{inputenc}

\usepackage{amsmath,amssymb}
\usepackage{enumerate}

 
\usepackage{tikz}
\usetikzlibrary{automata,positioning}
\tikzset{initial text={},
    every state/.style={circle,minimum size=.3cm,draw=blue!50,very thick,fill=blue!20},
    secret/.style={minimum size=.4cm,draw=red!50,very thick,fill=red!20,rectangle},
    node distance=1.5cm,on grid,auto,
    bend angle=65}
\usetikzlibrary{positioning,calc,chains}
\usetikzlibrary{automata,fit}
\usetikzlibrary{backgrounds}
\usetikzlibrary{decorations.pathreplacing}
\usetikzlibrary{fadings}
\usetikzlibrary{shapes.geometric}
\usetikzlibrary{decorations.pathreplacing}

\def\ie{{i.e.},~}
\def\eg{{e.g.},~}

\def\st{{s.t.}~}
\def\emptyset{\varnothing}

\def\Trace{\textit{Tr}}
\def\trace{\textit{tr}}

\def\faulty{\textit{Faulty}}

\def\nonfaulty{\textit{NonFaulty}}

\def\runs{\textit{Runs}} \def\lang{{\cal L}}  \def\cost{\textit{Cost}}
\newcommand{\proj}[1]{\boldsymbol{\pi}_{/#1}} 

\newcommand{\vect}[1]{\mathbf{#1}}

\newtheorem{prob}{Problem}  
\newtheorem{definition}{Definition} 
\newtheorem{theorem}{Theorem} 
\newtheorem{remark}{Remark} 
\newtheorem{example}{Example}

\newcommand{\sem}[1]{[\![#1]\!]}

\newcommand{\setN}{\mathbb N}
\newcommand{\setR}{\mathbb R}
\newcommand{\setB}{\mathbb B}
\newcommand{\setZ}{\mathbb Z}
\newcommand{\setQ}{\mathbb Q}

\def\calA{{\cal A}}

\def\calC{{\cal C}}

\def\calU{{\cal U}}
\def\cc{\calC}

\def\last{\textit{last}}

\def\Cost{\textit{Cost}}
\def\MeanCost{\textit{$\overline{\Cost}$}}



\def\endef{\ifmmode\squareforged\else{\unskip\nobreak\hfil
\penalty50\hskip1em\null\nobreak\hfil$\blacksquare$
\parfillskip=0pt\finalhyphendemerits=0\endgraf}\fi}

\def\endex{\ifmmode\squareforged\else{\unskip\nobreak\hfil
\penalty50\hskip1em\null\nobreak\hfil$\square$
\parfillskip=0pt\finalhyphendemerits=0\endgraf}\fi}

\def\ssi{iff\xspace}

\def\motvide{\varepsilon}
\def\tauac{\tau}


\newcommand{\dur}{{\textit{Dur}}} 
\def\inv{\textit{Inv}}

\def\tw{\textit{TW\/}}

\def\untimed{\textit{Unt}} \def\dta{DTA\xspace}
\def\dtamu{DTA$_\mu$\xspace}

\def\true{\mbox{\textsc{true}}}
\def\false{\mbox{\textsc{false}}}
\def\obs{\textit{Obs}\xspace}
\def\rg{\textit{RG}}





\title{\LARGE \bf Dynamic Observers for Fault Diagnosis of Timed
  Systems}

\author{Franck Cassez, \IEEEmembership{Member, IEEE}%
  \thanks{Franck Cassez is with National ICT Australia \& CNRS, Locked
    Bag 6016, The University of New South Wales, Sydney NSW~1466,
    Australia. \texttt{\scriptsize
      franck.cassez@cnrs.irccyn.fr,Franck.Cassez@nicta.com.au}}%
  \thanks{Author supported by a Marie Curie International Outgoing
    Fellowship within the 7th European Community Framework Programme.}
}

\begin{document}
\maketitle
  
\thispagestyle{empty}

\begin{abstract} 
  In this paper we extend the work on \emph{dynamic ob\-servers} for
  fault diagnosis~\cite{cassez-acsd-07,cassez-tase-07,cassez-fi-08} to
  timed automata. We study sensor minimization problems with static
  observers and then address the problem of computing the most
  permissive dynamic observer for a system given by a timed automaton.
\end{abstract}

\section{Introduction}

Discrete-event systems~\cite{RW87} (DES) can be modelled by finite
automata over an alphabet of actions/events $\Sigma$.  The fault
diagnosis problem~\cite{Raja95} for DES consists in detecting
\emph{faulty} sequences in the system.
A \emph{faulty} sequence is a sequence of the DES containing an
occurrence of a special event $f$.  It is assumed that an external
\emph{observer} which has to detect faults, knows the
specification/model of the DES, but can partially observe the system
at runtime: it is able to observe sequences of \emph{observable}
events in $\Sigma_o \subseteq \Sigma$.  Based on this knowledge, it
has to announce whether an observation (in $\Sigma_o^*$) stems from a
faulty sequence (in $(\Sigma \cup \{\tauac,f\})^*$). 
Checking diagnosability of DES can be done in PTIME and computing a
diagnoser amounts to determinizing the DES
(EXPTIME)~\cite{Raja95,Jiang-01,yoo-lafortune-tac-02}.
\smallskip

\noindent{\it \bfseries Fault Diagnosis for Timed Automata.} 
The fault diagnosis problem for Timed Automata (TA) has been
introduced and solved by S.~Tripakis in~\cite{tripakis-02}, where he
proved that checking {diagnosability} of a timed automaton is
PSPACE-complete.  In the timed case however, the diagnoser may be a
Turing machine.  In a subsequent work by P.~Bouyer~and
F.~Chevalier~\cite{Bouyerfossacs05}, the problem of checking whether a
timed automaton is diagnosable using a diagnoser which is a
\emph{deterministic} timed automaton (DTA) was studied, and they
proved that this problem was 2EXPTIME-complete.
\smallskip

\noindent{\it \bfseries Our Contribution and Related Work.} 
In~\cite{cassez-acsd-07,cassez-tase-07} (and~\cite{cassez-fi-08} for
an extended version), we have introduced \emph{dynamic observers} for
fault diagnosis of DES.  In this framework, an observer can choose
dynamically which events it is going to observe and make a new choice
after each occurrence of any (currently) observable event.
In~\cite{cassez-acsd-07,cassez-fi-08} we have shown how to compute
(2EXPTIME) a \emph{most permissive observer} which represents all the
the dynamic observers that ensures that a DES is diagnosable.
In~\cite{cassez-tase-07} we have furthermore introduced a notion of
\emph{cost} of an observer, and proved that an optimal observer could
also be computed in 2EXPTIME.

In this paper, we extend the previous results for systems given by
timed automata. We first settle the complexity of some optimization
problems with static observers (section~\ref{sec-static}).  We then
focus on dynamic \emph{timed} observers, and show how to compute
(section~\ref{sec-dynamic}) a most permissive (timed) dynamic
observer, under the assumption of bounded \emph{resources}. In
section~\ref{sec-cost}, we define a notion of \emph{cost} for timed
observers (which extends the one we have defined for DES
in\cite{cassez-tase-07}) and show how to compute the cost of a given
observer.  We also discuss the problem of synthesizing an optimal
timed dynamic observer.


\section{Preliminaries}\label{sec-prelim}
$\Sigma$ denotes a finite alphabet and $\Sigma_\tauac=\Sigma \cup
\{\tauac\}$ where $\tauac \not\in \Sigma$ is the \emph{unobservable}
action.  $\setB=\{\true,\false\}$ is the set of boolean values,
$\setN$ the set of natural numbers, $\setZ$ the set of integers and
$\setQ$ the set of rational numbers.  $\setR$ is the set of real
numbers and $\setR_{\geq 0}$ is the non-negative real numbers.

\subsection{Clock Constraints}
Let $X$ be a finite set of variables called \emph{clocks}.  A
\emph{clock valuation} is a mapping $v : X \rightarrow \setR_{\geq
  0}$. We let $\setR_{\geq 0}^X$ be the set of clock valuations over
$X$. We let $\vect{0}_X$ be the \emph{zero} valuation where all the
clocks in $X$ are set to $0$ (we use $\vect{0}$ when $X$ is clear from
the context).  Given $\delta \in \setR$, $v + \delta$ denotes the
valuation defined by $(v + \delta)(x)=v(x) + \delta$. We let $\cc(X)$
be the set of \emph{convex constraints} on $X$, \ie the set of
conjunctions of constraints of the form $x \bowtie c$ with $c
\in\setZ$ and $\bowtie \in \{\leq,<,=,>,\geq\}$. Given a constraint $g
\in \cc(X)$ and a valuation $v$, we write $v \models g$ if $g$ is
satisfied by $v$.  Given $R \subseteq X$ and a valuation $v$, $v[R]$
is the valuation defined by $v[R](x)=v(x)$ if $x \not\in R$ and
$v[R](x)=0$ otherwise.

\subsection{Timed Words}
The set of finite (resp. infinite) words over $\Sigma$ is $\Sigma^*$
(resp. $\Sigma^\omega$) and we let $\Sigma^\infty=\Sigma^* \cup \Sigma
^\omega$. We let $\varepsilon$ be the empty word. A \emph{language}
$L$ is any subset of $\Sigma^\infty$. A finite (resp. infinite)
\emph{timed word} over $\Sigma$ is a word in $(\setR_{\geq
  0}.\Sigma)^*.\setR_{\geq 0}$ (resp. $(\setR_{\geq
  0}.\Sigma)^\omega$).  $\dur(w)$ is the duration of a timed word $w$
which is defined to be the sum of the durations (in $\setR_{\geq 0}$)
which appear in $w$; if this sum is infinite, the duration is
$\infty$.  Note that the duration of an infinite word can be finite,
and such words which contain an infinite number of letters, are called
\emph{Zeno} words.

$\tw^*(\Sigma)$ is the set of finite timed words over $\Sigma$,
$\tw^\omega(\Sigma)$, the set of infinite timed words and
$\tw(\Sigma)=\tw^*(\Sigma) \cup \tw^\omega(\Sigma)$. A \emph{timed
  language} is any subset of $\tw(\Sigma)$. 

In this paper we write timed words as $0.4\ a\ 1.0\ b\ 2.7 \ c \cdots$
where the real values are the durations elapsed between two letters:
thus $c$ occurs at global time $4.1$. 
We let $\untimed(w)$ be the \emph{untimed} ver\-sion of $w$ ob\-tai\-ned by
erasing all the durations in $w$,
\eg $\untimed(0.4\ a\ 1.0\ b\ 2.7 \
c)=abc$.
 Given a timed language $L$, we
let $\untimed(L)=\{ \untimed(w) \ | \ w \in L \}$.

Let $\proj{\Sigma'}$ be the projection of timed words of $\tw(\Sigma)$
over timed words of $\tw(\Sigma')$.  When projecting a timed word $w$
on a sub-alphabet $\Sigma' \subseteq \Sigma$, the durations elap\-sed
bet\-ween two events are set accordingly: for instance
$\proj{\{a,c\}}(0.4 \ a\ 1.0\ b\ 2.7 \ c )=0.4 \ a \ 3.7 \ c$
(projection erases some letters but keep the time elapsed between two
letters).    Given $\Sigma' \subseteq \Sigma$,
$\proj{\Sigma'}(L)=\{ \proj{\Sigma'}(w) \ | \ w \in L\}$.

\subsection{Timed Automata}
Timed automata (TA) are finite automata extended with real-valued clocks to
specify timing constraints between occurrences of events.  For a
detailed presentation of the fundamental results for timed automata,
the reader is referred to the seminal paper of R.~Alur and
D.~Dill~\cite{AlurDill94}.
\noindent\begin{definition}[Timed Automaton]\label{def-ta} 
  A \emph{Timed Automaton} $A$ is a tuple $(L,$ $l_0,$
  $X,\Sigma_\tauac, E, \inv, F, R)$ where:
$L$ is a finite set of  \emph{locations}; 
$l_0$ is the \emph{initial location};
$X$ is a finite set of \emph{clocks};
$\Sigma$ is a finite set of \emph{actions}; 
$E \subseteq L \times\calC(X) \times \Sigma_\tauac \times 2^X \times
L$ is a finite set of \emph{transitions}; for
$(\ell,g,a,r,\ell') \in E$, $g$ is the \emph{guard}, $a$ the \emph{action},
and $r$ the \emph{reset} set;
$\inv \in \calC(X)^L$ associates with each location an
  \emph{invariant}; as usual we require the invariants to be
  conjunctions of constraints of the form $x \preceq c$ with $\preceq \in
  \{<,\leq\}$.
  $F \subseteq L$ and $R \subseteq L$ are respectively the
  \emph{final} and \emph{repeated} sets of locations. \endef
\end{definition}
A \emph{state} of $A$ is a pair $(\ell,v) \in L \times \setR_{\geq
  0}^X$.
%
%
A \emph{run} $\varrho$ of $A$ from $(\ell_0,v_0)$ is a (finite or
infinite) sequence of alternating \emph{delay} and \emph{discrete}
moves:
\begin{eqnarray*}
  \varrho & = & (\ell_0,v_0) \xrightarrow{\delta_0} (\ell_0,v_0 + \delta_0)
  \xrightarrow{a_0} (\ell_1,v_1) \cdots  \\ & & \cdots \xrightarrow{a_{n-1}} (\ell_n,v_n)
  \xrightarrow{\delta_n} (\ell_n,v_n+ \delta_n) \cdots 
\end{eqnarray*}
\st for every $i \geq 0$:
\begin{itemize}
\item $v_i + \delta \models \inv(\ell_i)$ for $0 \leq \delta \leq \delta_i$;
\item there is some transition $(\ell_i,g_i,a_i,r_i,\ell_{i+1}) \in E$
  \st: ($i$) $v_i + \delta_i \models g_i$ and ($ii$)
  $v_{i+1}=(v_i+\delta_i)[r_i]$.
\end{itemize}
The set of finite (resp. infinite) runs 
from a state $s$ is denoted $\runs^*(s,A)$ (resp. $\runs^\omega(s,A)$)
and we define $\runs^*(A)=\runs^*((l_0,\vect{0}),A)$,
$\runs^\omega(A)=\runs^\omega((l_0,\vect{0}),A)$  
and finally $\runs(A)=\runs^*(A) \cup \runs^\omega(A)$.  If $\varrho$
is finite and ends in $s_n$, we let $\last(\varrho)=s_n$.  Because of
the den\-se\-ness of the time domain, the transition graph of $A$ is
infinite (uncountable number of states and delay edges).
The \emph{trace}, $\trace(\varrho)$, of a run $\varrho$ is the timed
word $\proj{\Sigma}(\delta_0 a_0 \delta_1 a_1 \cdots a_n \delta_n
\cdots)$.  We let
$\dur(\varrho)=\dur(\trace(\varrho))$.  For $V \subseteq \runs(A)$, we
let $\Trace(V)=\{\trace(\varrho) \ | \ \textit{ $\varrho \in V$}\}$.

A finite (resp. infinite) timed word $w$ is \emph{accepted} by $A$ if
it is the trace of a run of $A$ that ends in an $F$-location (resp. a
run that reaches infinitely often an $R$-location).  $\lang^*(A)$
(resp. $\lang^\omega(A)$) is the set of traces of finite
(resp. infinite) timed words accepted by $A$, and $\lang(A)=\lang^*(A)
\cup \lang^\omega(A)$ is the set of timed words accepted by $A$.
In the sequel we often omit the sets $R$ and $F$ in TA and this
implicitly means $F=L$ and $R=\emptyset$.

A timed automaton $A$ is \emph{deterministic} if there is no $\tauac$
labelled transition in $A$, and if, whenever $(\ell,g,a,r,\ell')$ and
$(\ell,g',a,r',\ell'')$ are transitions of $A$, $g \wedge g' \equiv
\false$.  $A$ is \emph{complete} if from each state $(\ell,v)$, and
for each action $a$, there is a transition $(\ell,g,a,r,\ell')$ such
that $v \models g$.  We note \dta the class of deterministic timed
automata.

\subsection{Region Graph of a TA}
The \emph{region graph} $\rg(A)$ of a TA $A$ is a finite quotient of
the infinite graph of $A$ which is time-abstract bisimilar to
$A$~\cite{AlurDill94}.  It is a finite automaton (FA) on the alphabet
$E'= E \cup \{\tauac\}$. The states of $\rg(A)$ are pairs $(\ell,r)$
where $\ell \in L$ is a location of $A$ and $r$ is a \emph{region} of
$\setR_{\geq 0}^X$. More generally, the edges of the graph are tuples
$(s,t,s')$ where $s,s'$ are states of $\rg(A)$ and $t \in E'$.
Genuine unobservable moves of $A$ labelled $\tauac$ are labelled by
tuples of the form $(s,(g,\tauac,r),s')$ in $\rg(A)$.
An edge $(g,\lambda,R)$ in the region graph corresponds to a discrete
transition of $A$ with guard $g$, action $\lambda$ and reset set $R$.
A $\tauac$ move in $\rg(A)$ stands for a delay move to the
time-successor region.  The initial state of $\rg(A)$ is
$(l_0,\vect{0})$.  A final (resp. repeated) state of $\rg(A)$ is a
state $(\ell,r)$ with $\ell \in F$ (resp. $\ell \in R$).  A
fundamental property of the region graph~\cite{AlurDill94} is:
\begin{theorem}[\cite{AlurDill94}] \label{thm-alur}
  $\lang(\rg(A))=\untimed(\lang(A))$. 
\end{theorem}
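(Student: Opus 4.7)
The plan is to prove the two inclusions $\untimed(\lang(A)) \subseteq \lang(\rg(A))$ and $\lang(\rg(A)) \subseteq \untimed(\lang(A))$ by exploiting the fact that the region equivalence is a \emph{time-abstract bisimulation} between $A$ and $\rg(A)$. The region equivalence $\sim$ on $\setR_{\geq 0}^X$ is defined using the maximal constant $M$ appearing in the guards and invariants of $A$: two valuations are equivalent if they agree on the integer parts of all clocks (capped at $M$), on the ordering of fractional parts, and on which clocks have integer value. First I would recall (or re-prove) the two standard lemmas: (i) if $v \sim v'$ then for every discrete edge $(\ell,g,a,r,\ell')$, $v \models g \Leftrightarrow v' \models g$ and $v[r] \sim v'[r]$; (ii) if $v \sim v'$ and $\delta \in \setR_{\geq 0}$, then there exists $\delta' \in \setR_{\geq 0}$ such that $v + \delta \sim v' + \delta'$, and moreover the finite sequence of regions traversed by the family $\{v+t \mid 0 \leq t \leq \delta\}$ depends only on the region of $v$, not on $v$ itself. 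Together these give the required bisimulation property, as well as the fact that every delay can be decomposed into a finite sequence of $\tauac$-moves in $\rg(A)$ between time-successor regions.

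For the inclusion $\untimed(\lang(A)) \subseteq \lang(\rg(A))$, take a finite or infinite accepting run
\[
\varrho = (\ell_0,v_0) \xrightarrow{\delta_0} (\ell_0,v_0+\delta_0) \xrightarrow{a_0} (\ell_1,v_1) \cdots
\]
of $A$. By induction on the length of $\varrho$, I build a run $\tilde\varrho$ of $\rg(A)$ starting at $(l_0,\vect{0})$: each delay $\delta_i$ is replaced by the finite sequence of $\tauac$-labeled edges that visit exactly the sequence of regions traversed by $\{v_i+t \mid 0 \leq t \leq \delta_i\}$, and each discrete transition $(\ell_i,g_i,a_i,r_i,\ell_{i+1})$ is simulated by the corresponding edge in $\rg(A)$, which is enabled thanks to lemma (i). Since $F$- and $R$-states of $\rg(A)$ are defined by the underlying location, $\tilde\varrho$ is accepting, and by construction $\trace(\tilde\varrho) = \untimed(\trace(\varrho))$ after erasing the $\tauac$-labels.

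For the converse inclusion $\lang(\rg(A)) \subseteq \untimed(\lang(A))$, take an accepting run $\tilde\varrho$ of $\rg(A)$. I would lift it to a run of $A$ step by step: starting from the representative $\vect{0}$, each $\tauac$-edge of $\rg(A)$ between regions $r$ and $r'$ is realized by choosing a delay $\delta$ such that $v+\delta$ lies in $r'$ (existence given by lemma (ii)), and each discrete edge is realized directly using lemma (i). A standard compactness/König argument handles the infinite case when one must commit to finite delays while preserving the ability to visit the required $R$-state infinitely often. The resulting run $\varrho$ of $A$ is accepting and satisfies $\untimed(\trace(\varrho))$ equal to the word read along $\tilde\varrho$ (ignoring $\tauac$).

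The main obstacle is the correct definition of regions together with the verification of the time-successor property in lemma (ii), i.e.\ that the sequence of regions encountered along the ray $t \mapsto v+t$ depends only on the region of $v$. This is the technical heart of Alur--Dill; however, since the statement is quoted from \cite{AlurDill94}, in our paper it suffices to invoke it. A secondary, purely bookkeeping issue is checking that acceptance by final and repeated locations transfers correctly between $A$ and $\rg(A)$, which is immediate because acceptance only inspects the location component of states.
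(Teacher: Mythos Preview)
The paper does not prove this theorem at all: it is stated with a citation to Alur and Dill~\cite{AlurDill94} and used as a black box, with no accompanying argument. Your sketch is the standard time-abstract bisimulation proof from~\cite{AlurDill94} and is correct in outline; you correctly isolate the two key lemmas (compatibility of regions with guards/resets, and the time-successor property) and the need to decompose each real delay into a finite chain of $\tauac$-moves between successive regions. Since the paper offers no proof to compare against, and your own final paragraph already anticipates that one may simply invoke the cited result, there is nothing further to reconcile.
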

  The (maximum) size of the region graph is exponential in the number
  of clocks and in the maximum constant of the automaton $A$
  (see~\cite{AlurDill94}): $|\rg(A)|=|L|\cdot |X|! \cdot 2^{|X|} \cdot
  K^{|X|}$ where $K$ is the largest constant used in $A$.

\subsection{Product of TA}
\begin{definition}[Product of two TA] \label{def-prod-sync} Let
  $A_i=(L_i,l_0^i,X_i,$ $\Sigma^i_{\tauac},E_i,\inv_i)$ for $i \in\{1,2\}$,
  be two TA \st $X_1 \cap X_2 = \emptyset$.  The \emph{product} of
  $A_1$ and $A_2$ is the TA $A_1 \times
  A_2=(L,l_0,X,\Sigma_{\tauac},$ $E,\inv)$ given by:
$L=L_1 \times L_2$;
$l_0=(l_0^1,l_0^2)$;
$\Sigma=\Sigma^1 \cup \Sigma^2$;
$X = X_1 \cup X_2$; and
$E \subseteq L \times \calC(X) \times \Sigma_\tauac \times 2^X \times
    L$ and
    $((\ell_1,\ell_2),g_{1,2},\sigma,r,(\ell'_1,\ell'_2)) \in E$
    if:
    \begin{itemize}
    \item either $\sigma \in (\Sigma_1 \cap \Sigma_2) \setminus
      \{\tauac \}$, and ($i$) $(\ell_k,g_k,\sigma,r_k,\ell'_k) \in
      E_k$ for $k=1$ and $k=2$; ($ii$) $g_{1,2} = g_1 \wedge g_2$ and
      ($iii$) $r=r_1 \cup r_2$;
    \item or for $k=1$ or $k=2$, $\sigma \in (\Sigma_k \setminus
      \Sigma_{3-k}) \cup \{\tauac\}$, and ($i$)
      $(\ell_k,g_k,\sigma,r_k,\ell'_k) \in E_k$; ($ii$) $g_{1,2}=g_k$
      and ($iii$) $r=r_k$;
    \end{itemize}
    and finally $\inv(\ell_1,\ell_2)= \inv(\ell_1) \wedge
    \inv(\ell_2)$.
     \endef
\end{definition}


\section{Fault Diagnosis Problems \& Known Results}\label{sec-fd}

\subsection{The Model}
To model timed systems with faults, we use timed automata on the
alphabet $\Sigma_{\tauac,f}=\Sigma_{\tauac}\cup \{f\}$ where $f$ is
the \emph{faulty} (and unobservable) event. We only consider one type
of fault, but the results we give are valid for many types of
faults $\{f_1,f_2, \cdots,f_n\}$: indeed solving the many types
diagnosability problem amounts to solving $n$ one type diagnosability
problems~\cite{yoo-lafortune-tac-02}.
The observable events are given by $\Sigma_o \subseteq \Sigma$ and
$\tauac$ is always unobservable.

The system we want to supervise is given by a TA
$A=(L,l_0,$$X,\Sigma_{\tauac,f},E,\inv)$. Fig.~\ref{fig-ex-diag1}
gives an example of such a system.  Invariants in the automaton $\calA$
 are written within square brackets as in $[x \leq 3]$.
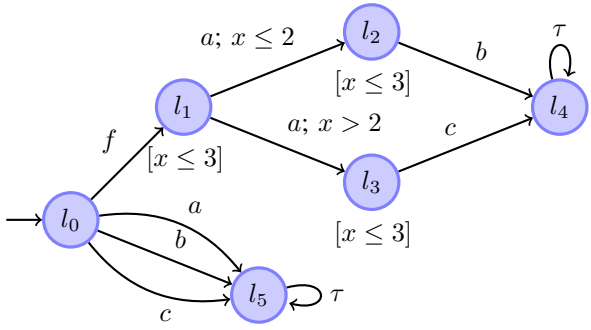
\begin{figure}[hbtp]
  \centering
  \begin{tikzpicture}[thick,node distance=1cm and 2.5cm]%
    \node[state,initial] (l0) {$l_0$}; 
    \node[state] (l1) [above right=of l0,xshift=-1cm,yshift=0.5cm,label=-87:{$[x \leq 3]$}] {$l_1$}; 
    \node[state] (l2) [above right=of l1,label=-87:{$[x \leq 3]$}] {$l_2$};
    \node[state] (l3) [below right=of l1,label=-87:{$[x \leq 3]$}] {$l_3$};
    \node[state] (l4) [above right=of l3] {$l_4$};
    \node[state] (l5) [below right=of l0] {$l_5$};
    \path[->] (l0) edge node[pos=0.5] {$f$} (l1) 
                edge[bend angle=30,bend left] node[pos=0.5]  {$a$} (l5)
                edge node[pos=0.5]  {$b$} (l5)
                edge[bend angle=30,bend right] node[swap,pos=0.7]  {$c$} (l5)
              (l1) edge [pos=0.7] node  {$a$; $x \leq 2$} (l2)
              (l1) edge  node  {$a$; $x > 2$} (l3)
              (l2) edge  [] node  {$b$} (l4)
              (l3) edge  [] node  {$c$} (l4)
              ;
    \path[->] (l4) edge[loop above] node {$\tauac$} (l4);
    \path[->] (l5) edge[loop right] node {$\tauac$} (l5);
  \end{tikzpicture}
\caption{The Timed Automaton $\calA$}
\label{fig-ex-diag1}
\end{figure}

\noindent Let $\Delta \in \setN$. A run of $A$
\begin{eqnarray*}
  \varrho & = & (\ell_0,v_0) \xrightarrow{\delta_0} (\ell_0,v_0 + \delta_0)
  \xrightarrow{a_0} (\ell_1,v_1) \cdots \\ 
  & & \cdots \xrightarrow{a_{n-1}} (\ell_n,v_n)
  \xrightarrow{\delta_n} (\ell_n,v_n+ \delta) \cdots
\end{eqnarray*}
is $\Delta$-faulty if: (1) there is an index $i$ \st $a_i=f$ and (2)
the duration of the run $\varrho'=(\ell_{i},v_i)
\xrightarrow{\delta_{i}} \cdots \xrightarrow{\delta_n}
(\ell_n,v_n+\delta_n) \cdots$ is larger than $\Delta$.  We let
$\faulty_{\geq \Delta}(A)$ be the set of $\Delta$-faulty runs of $A$.
Note that by definition, if $\Delta' \geq \Delta$ then $\faulty_{\geq
  \Delta'}(A) \subseteq \faulty_{\geq \Delta}(A)$. We let
$\faulty(A)=\cup_{\Delta \geq 0}\faulty_{\geq \Delta}(A)=\faulty_{\geq
  0}(A)$ be the set of faulty runs of $A$, and $\nonfaulty(A)=\runs(A)
\setminus \faulty(A)$ be the set of non-faulty runs of $A$.
Moreover we use  
$$\faulty^{\textit{tr}}_{\geq
  \Delta}(A)=\Trace(\faulty_{\geq \Delta}(A))$$ and
$$\nonfaulty^{\textit{tr}}(A)=\Trace(\nonfaulty(A))$$ 
which are the traces\footnote{Notice that $\trace(\varrho)$ erases
  $\tauac$ and $f$.} of $\Delta$-faulty and non-faulty runs of $A$.

\subsection{Diagnosers}
The purpose of fault diagnosis is to detect a fault as soon as
possible.  Faults are unobservable and only the events in $\Sigma_o$
can be observed as well as the time elapsed between these events.
Whenever the system generates a timed word $w$, the observer can only
see $\proj{\Sigma_o}(w)$.  If an observer can detect faults in this
way it is called a \emph{diagnoser}.  A diagnoser must detect a fault
within a given delay $\Delta \in \setN$.

\begin{definition}[$(\Sigma_o,\Delta)$-Diagnoser]\label{def-diag}
  Let $A$ be a TA over the alphabet $\Sigma_{\tauac,f}$, $\Sigma_o
  \subseteq \Sigma$ and $\Delta \in \setN$.  A
  \emph{$(\Sigma_o,\Delta)$-diagnoser} for $A$ is a mapping $D:
  \tw^*(\Sigma_o)\rightarrow \{0,1\}$ such that:
  \begin{itemize}
  \item for each $\varrho \in \nonfaulty(A)$,
    $D(\proj{\Sigma_o}(\varrho))=0$,
  \item for each $\varrho \in \faulty_{\geq \Delta}(A)$,
    $D(\proj{\Sigma_o}(\varrho))=1$. \endef
  \end{itemize}
\end{definition}
$A$ is $(\Sigma_o,\Delta)$-diagnosable if there exists a
$(\Sigma_o,\Delta)$-diagnoser for $A$. $A$ is $\Sigma_o$-diagnosable
if there is some $\Delta \in \setN$ \st $A$ is
$(\Sigma_o,\Delta)$-diagnosable.

\begin{example}
  The TA $\calA$ in Fig.~\ref{fig-ex-diag1} with
  $\Sigma=\Sigma_o=\{a,b,c\}$ is $(\Sigma,3)$-diagnosable.  For the
  timed words with an $a$ followed by either a $b$ or a $c$ a fault
  must have occurred.  Otherwise no fault should be reported.
%
  If $\Sigma_o=\{b\}$, in $\calA$ there are two runs:
  \begin{eqnarray*}
    \rho_1 \!\!\!  & = & \!\!\! (l_0,0) \xrightarrow{\; f \; } (l_1,0) 
    \xrightarrow{\; a \; } (l_2,0) 
    \xrightarrow{\; 3 \; } (l_2,3) 
    \xrightarrow{\; b\; } (l_4,3) \cdots  \\
    \rho_2 \!\!\! & = & \!\!\! (l_0,0) 
    \xrightarrow{3} (l_0,3) \xrightarrow{\ b \ } (l_5,3) \cdots  
  \end{eqnarray*}
  that satisfy $\trace(\rho_1)=\trace(\rho_2)$, and thus $\calA$ is
  not $(\{b\},3)$-dia\-gnosable. To diagnose a fault in $\calA$, $a$
  must be observed. \endex
\end{example}

\subsection{Classical Diagnosis Problems}
\noindent
Assume $A=(L,\ell_0,X,\Sigma_{\tauac,f},E,\inv)$ is a TA
.  The
classical fault diagnosis problems are the following: 
\begin{prob}[Bounded or $\Delta$-Diagnosability] \label{prob-delta-diag} \mbox{} \\
  \textsc{Inputs:} A  TA $A$, $\Sigma_o \subseteq \Sigma$, and $\Delta \in \setN$. \\
  \textsc{Problem:} Is $A$ $(\Sigma_o,\Delta)$-diagnosable?
\end{prob}
\begin{prob}[Diagnosability] \label{prob-diag} \mbox{} \\
  \textsc{Inputs:} A TA $A$ and $\Sigma_o \subseteq \Sigma$. \\
  \textsc{Problem:} Is $A$ $\Sigma_o$-diagnosable?
\end{prob}
\begin{prob}[Maximum delay] \label{prob-delay} \mbox{} \\
  \textsc{Inputs:} A TA $A$ and and $\Sigma_o \subseteq \Sigma$. \\
  \textsc{Problem:} If $A$ is $\Sigma_o$-diagnosable, what is the
  minimum $\Delta$ \st $A$ is $(\Sigma_o,\Delta)$-diagnosable ?
\end{prob}
\smallskip

According to Definition~\ref{def-diag}, $A$
is $\Sigma_o$-diagnosable, \ssi, there is some $\Delta \in \setN$ \st
$A$ is $(\Sigma_o,\Delta)$-diagnosable. Thus $A$ is not $\Sigma_o$-diagnosable
\ssi $\forall \Delta \in \setN$, $A$ is not  $(\Sigma_o,\Delta)$-diagnosable.
%
Moreover a trace based definition of
$(\Sigma_o,\Delta)$-diagnosability can be stated as\footnote{This
  definition does not take into account \emph{Zeno} runs; this is not
  difficult to add and the reader is referred to~\cite{cassez-cdc-09}
  for more details.}: $A$ is $(\Sigma_o,\Delta)$-diagnosable \ssi
\begin{equation}
  \proj{\Sigma_o}(\faulty_{\geq \Delta}^{\trace}(A)) \cap
  \proj{\Sigma_o}(\nonfaulty^{\trace}(A)) = \emptyset \mathpunct. \label{eq-base}
\end{equation}
This gives a necessary and sufficient condition for non
$\Sigma_o$-diagnosability:
\begin{eqnarray}
\label{eq-diagnos2}
\hskip0em\text{$A$ is not $\Sigma_o$-diagnosable} &  \hskip-1.2em \iff  
& \hskip-1.3em 
 \begin{cases}
   \forall \Delta \in \setN, \\
   \quad \exists \rho \in \nonfaulty(A) \\
   \quad \exists \rho' \in \faulty_{\geq \Delta}(A) \textit{ \st }  \\
   \quad\;\;\proj{\Sigma_o}(\rho) = \proj{\Sigma_o}(\rho') \mathpunct,
 \end{cases}
\end{eqnarray}
or in other words, there is no pair of runs $(\rho_1,\rho_2)$ with
$\rho_1 \in \faulty_{\geq \Delta}(A)$, $\rho_2 \in \nonfaulty(A)$ the
$\Sigma_o$-traces of which are equal.

Complexity results for the diagnosis problems on timed automata were
established in~\cite{tripakis-02} (see~\cite{cassez-cdc-09} for a
comprehensive study) and
Problems~\ref{prob-delta-diag}--\ref{prob-delay} are PSPACE-complete
(note that PSPACE-completeness already holds for $\Sigma_o=\Sigma$).

\section{Sensor Minimization with Static Observers}
\label{sec-static}
In this section, we extend the results of~\cite{cassez-acsd-07} to
systems given by TA.

\begin{prob}[Minimum Cardinality Set] \label{prob-static-minimum} \mbox{} \\
  \textsc{Inputs:} A  TA $A=(L,\ell_0,X,\Sigma_{\tauac,f},E,\inv)$ and $n \in \setN$. \\
  \textsc{Problem:}
  \begin{itemize}
  \item[(A)] Is there any set $\Sigma_o \subseteq \Sigma$, with
    $|\Sigma_o| =n$ \st $A$ is $\Sigma_o$-diagnosable ?
  \item[(B)] If the answer to~(A) is ``yes'', compute the minimum
    value for $n$.
  \end{itemize}
\end{prob}
\begin{theorem}
  Problem~\ref{prob-static-minimum} is PSPACE-complete.
\end{theorem}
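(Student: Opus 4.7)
My plan is to prove membership in PSPACE by an NPSPACE algorithm combined with Savitch's theorem, and to prove PSPACE-hardness by a direct reduction from the diagnosability problem (Problem~\ref{prob-diag}), whose PSPACE-hardness is inherited from~\cite{tripakis-02}.

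For the upper bound on part~(A), I would describe the following NPSPACE procedure: nondeterministically guess a subset $\Sigma_o \subseteq \Sigma$ with $|\Sigma_o|=n$, which uses only $|\Sigma|$ bits of space, and then invoke the PSPACE decision procedure for Problem~\ref{prob-diag} on the instance $(A,\Sigma_o)$. Since the diagnosability check runs in space polynomial in $|A|$ and the guess is reused as an input to that check, the overall algorithm uses polynomial space. By Savitch's theorem, $\mathrm{NPSPACE}=\mathrm{PSPACE}$. For part~(B), the minimum $n$ can be found by iterating the decision procedure for $n=0,1,\dots,|\Sigma|$ and returning the smallest value answering ``yes''; this is still in polynomial space because the space of each call can be reused.

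For PSPACE-hardness of part~(A), I would reduce from Problem~\ref{prob-diag}. Given an instance $(A,\Sigma_o)$ of diagnosability with $A=(L,\ell_0,X,\Sigma_{\tauac,f},E,\inv)$, construct $A'$ by relabelling every action in $\Sigma \setminus \Sigma_o$ by $\tauac$ (leaving $f$ and the already-$\tauac$ transitions untouched). The resulting TA $A'$ has action alphabet $\Sigma' = \Sigma_o$, and for any $\Gamma \subseteq \Sigma'$, projection of a trace of $A'$ on $\Gamma$ coincides with projection of the corresponding trace of $A$ on $\Gamma$. Hence $A$ is $\Sigma_o$-diagnosable iff $A'$ is $\Sigma_o$-diagnosable. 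Now take the Problem~\ref{prob-static-minimum} instance $(A', n)$ with $n=|\Sigma_o|=|\Sigma'|$: the only set of that cardinality is $\Sigma'$ itself, so a positive answer for $(A',n)$ is equivalent to $\Sigma_o$-diagnosability of $A$. The reduction is clearly polynomial.

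The only step with any subtlety is checking that the relabelling reduction preserves diagnosability in both directions; this is routine because the characterisation of non-diagnosability via equation~\eqref{eq-diagnos2} depends only on the $\Sigma_o$-projections of runs, which are unaffected by relabelling events in $\Sigma \setminus \Sigma_o$ to $\tauac$ and by the fact that $\trace(\varrho)$ already erases $\tauac$ and $f$. Combining membership and hardness yields PSPACE-completeness of Problem~\ref{prob-static-minimum}, and the same complexity bound holds for the optimisation version~(B) by the $|\Sigma|$-fold repetition argument above.
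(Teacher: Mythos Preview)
Your proof is correct and follows essentially the same structure as the paper's: guess-and-check in NPSPACE (hence PSPACE) for membership, and a reduction from Problem~\ref{prob-diag} for hardness. The only difference is that the paper exploits the remark that Problem~\ref{prob-diag} is already PSPACE-hard when $\Sigma_o=\Sigma$, so it simply takes $n=|\Sigma|$ without any relabelling; your relabelling step is correct but unnecessary, and for part~(B) the paper uses binary search rather than linear iteration, which makes no difference to the PSPACE bound.
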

\begin{proof}
  PSPACE-easiness for (A) can be established as follows: guess a set
  $\Sigma_o$ with $|\Sigma_o| =n$ and check (in PS\-PACE) whether $A$ is
  $\Sigma_o$-diagnosable. This proves  NPSPACE and thus in
  PSPACE.  PSPACE-hardness follows from the reduction of
  Problem~\ref{prob-diag} to Problem~\ref{prob-static-minimum}.(A)
  with $n=|\Sigma|$.  This establishes PSPACE-completeness for (A).
  Computing the minimum $n$ can be done using a binary search
  (dichotomy) and thus (B) is also in PSPACE.
\end{proof}

\medskip The previous results also hold in a more general setting
using \emph{masks}.  Masks are useful to capture the notion of
\emph{distinguishability} among observable events.  Indeed, there are
cases where two events $a$ and $b$ are observable but not
distinguishable, that is, the diagnoser knows that $a$ or $b$
occurred, but not which of the two. This is not the same as
considering $a$ and $b$ to be unobservable, since in that case the
diagnoser would not be able to detect the occurrence of $a$ or $b$.
Distinguishability of events is captured by the notion of a
\emph{mask}~{\cite{Varaiyaetal88}}.

\begin{definition}[Mask]\label{def-mask}
  A \emph{mask} $(M,n)$ (of size $n$) over $\Sigma$ is a total,
  surjective function $M: \Sigma \rightarrow
  \{\mathbf{1},\cdots,\mathbf{n}\} \cup \{\varepsilon\}$. \endef
\end{definition}
$M$ induces a morphism $M^* : \tw^*(\Sigma) \rightarrow
\tw^*(\{\mathbf{1},\cdots,\mathbf{n}\})$, where
$M^*(\varepsilon)=\varepsilon$ and $M^*(a.\rho)=M(a).M^*(\rho)$, for
$a\in\Sigma$ and $\rho\in\Sigma^*$.  For example, if
$\Sigma=\{a,b,c,d\}$, $n=2$ and $M(a)=M(d)=\mathbf{1}$,
$M(c)=\mathbf{2}$, $M(b)=\motvide$, then we have $M^*(a\ 0.4 \ b \ 0.2
\ c \ 1.1 \ b \ 0.7 \ d) = \mathbf{1} \ 0.6 \ \mathbf{2} \ 1.8\
\mathbf{1}$.

\begin{definition}[$(M,n),\Delta)$-diagnoser]\label{def-mask-diag} Let
  $(M,n)$ be a mask over $\Sigma$. A mapping $D:
  \tw^*(\{\mathbf{1},\cdots,\mathbf{n}\}) \rightarrow \{0,1\}$ is a
  \emph{$((M,n),\Delta)$-diagnoser} for $A$ if:
  \begin{itemize}
  \item for each $\rho \in \nonfaulty(A)$,
    $D(M^*(\trace(\rho)))=0$;
  \item for each $\rho \in \faulty_{\geq k}(A)$,
    $D(M^*(\trace(\rho)))=1$. \endef
  \end{itemize}
\end{definition}

$A$ is $((M,n),\Delta)$-diagnosable if there is a
$((M,n),\Delta)$-diagnoser for $A$. $A$ is said to be
$(M,n)$-diagnosable if there is some $\Delta$ such that $A$ is
$((M,n),\Delta)$-diagnosable.  Given a mask $(M,n)$ and $A$, checking
whether $A$ is $(M,n)$-diagnosable can be done in PSPACE: it suffices
to replace each event $a \in \Sigma$ by $M(a)$ and check for
diagnosability. It is PSPACE-complete as using an identity mask of
cardinality $|\Sigma|$ solves Problem~\ref{prob-diag}.

The counterpart of Problem~\ref{prob-static-minimum} with masks is the
following:

\begin{prob}[Minimum Cardinality Mask] \label{prob-static-mask} \mbox{} \\
  \textsc{Inputs:} A  TA $A=(L,\ell_0,X,\Sigma_{\tauac,f},E,\inv)$ and $n \in \setN$. \\
  \textsc{Problem:}
  \begin{itemize}
  \item[(A)] Is there any mask $(M,n)$, \st $A$ is $(M,n)$-diagnosable?
  \item[(B)] If the answer to~(A) is ``yes'', compute the minimum value for $n$.
  \end{itemize}
\end{prob}
\begin{theorem}\label{thm-mask}
  Problem~\ref{prob-static-mask} is PSPACE-complete.
\end{theorem}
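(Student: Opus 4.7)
The plan is to follow the same strategy as for Problem~\ref{prob-static-minimum}, treating the mask $(M,n)$ as the object to guess and reducing the check of $(M,n)$-diagnosability to a standard diagnosability check.

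First I would establish PSPACE-easiness of (A). A mask $(M,n)$ is described by a table of size $|\Sigma|\cdot\lceil\log_2(n{+}1)\rceil$ bits, hence polynomial in the size of the input. The algorithm guesses $(M,n)$ non-deterministically, then builds the TA $A_M$ obtained from $A$ by relabelling every transition $(\ell,g,a,r,\ell')$ with $a\in\Sigma$ into $(\ell,g,M(a),r,\ell')$ (and turning the transition into a $\tauac$-transition when $M(a)=\varepsilon$). By the definition of $M^*$, $A$ is $(M,n)$-diagnosable iff $A_M$ is $\{\mathbf{1},\dots,\mathbf{n}\}$-diagnosable, which can be decided in PSPACE by the results of~\cite{tripakis-02} recalled for Problem~\ref{prob-diag}. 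The whole procedure is therefore in NPSPACE, which equals PSPACE by Savitch's theorem.

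Next I would prove PSPACE-hardness by reducing Problem~\ref{prob-diag} to Problem~\ref{prob-static-mask}.(A). Given an instance $(A,\Sigma_o)$ of Problem~\ref{prob-diag}, let $n=|\Sigma_o|$ and fix any bijection $\mu:\Sigma_o\to\{\mathbf{1},\dots,\mathbf{n}\}$. Define the mask $(M,n)$ by $M(a)=\mu(a)$ if $a\in\Sigma_o$ and $M(a)=\varepsilon$ if $a\in\Sigma\setminus\Sigma_o$. Since $\mu$ is a bijection, $M^*\circ\trace$ carries exactly the same information as $\proj{\Sigma_o}\circ\trace$ up to renaming of letters, so $A$ is $\Sigma_o$-diagnosable iff $A$ is $(M,n)$-diagnosable. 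This reduction is polynomial, which yields PSPACE-hardness of (A) and, together with the previous paragraph, PSPACE-completeness of (A).

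Finally, for (B), I would run a binary search on $n\in\{1,\dots,|\Sigma|\}$: at each step invoke the PSPACE decision procedure for (A) to test whether some mask of size $n$ makes $A$ diagnosable. This uses $O(\log|\Sigma|)$ calls to a PSPACE oracle with polynomially bounded counters, hence stays in PSPACE. The main subtlety — and the only place where care is needed — is the correctness of the reduction in (A): one must check that non-deterministically guessed $M$ values give the same language-theoretic behaviour as a fresh observable alphabet, and that PSPACE-hardness does not collapse when $M$ is forced to be surjective onto $\{\mathbf{1},\dots,\mathbf{n}\}$. Surjectivity is not restrictive in the reduction since $\mu$ is itself a bijection, so the construction above produces a valid mask.
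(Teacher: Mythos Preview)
Your PSPACE-easiness argument for (A) and the binary search for (B) are fine and match the paper. The PSPACE-hardness argument for (A), however, has a genuine gap.

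You reduce Problem~\ref{prob-diag} to Problem~\ref{prob-static-mask}.(A) by sending $(A,\Sigma_o)$ to the instance $(A,n)$ with $n=|\Sigma_o|$, and you construct a specific mask $M$ (the bijection $\mu$ on $\Sigma_o$, $\varepsilon$ elsewhere) such that $A$ is $\Sigma_o$-diagnosable iff $A$ is $(M,n)$-diagnosable. That equivalence is correct, but it is not what Problem~\ref{prob-static-mask}.(A) asks: the question is whether \emph{some} mask of size $n$ works, not whether your particular $M$ does. The forward direction of the reduction is fine (if $A$ is $\Sigma_o$-diagnosable then your $M$ witnesses a YES answer), but the backward direction fails: if $A$ is not $\Sigma_o$-diagnosable, there may well be another mask $M'$ of size $|\Sigma_o|$---for instance one that observes a different subset of $\Sigma$, or one that merges events---for which $A$ is $(M',n)$-diagnosable. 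Then Problem~\ref{prob-static-mask}.(A) on $(A,n)$ answers YES while Problem~\ref{prob-diag} on $(A,\Sigma_o)$ answers NO, so the reduction is unsound.

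The paper fixes this by reducing from the special case $\Sigma_o=\Sigma$, which is already PSPACE-hard. With $n=|\Sigma|$, surjectivity of the mask forces every mask of size $n$ to be a bijection $\Sigma\to\{\mathbf{1},\dots,\mathbf{n}\}$, i.e.\ a pure renaming; hence \emph{any} mask of size $|\Sigma|$ makes $A$ diagnosable iff $A$ is $\Sigma$-diagnosable, and both directions of the reduction go through. Your proof is easily repaired by restricting to this case.
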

\begin{proof}
  PSPACE-easiness is proved by: 1) guessing a mask $(M,n)$ and
  checking (in PSPACE) that $A$ is $(M,n)$-diagnosable.
  PSPACE-hardness is proved as follows.  If there is a mask $(M,n)$
  with $n=|\Sigma|$ \st $A$ is $(M,n)$-diagnosable, then, as $M$ is
  surjective, it must be the case that $M$ is a one-to-one mapping
  from $\Sigma$ to $\{\mathbf{1},\cdots,\mathbf{n}\}$. It follows that
  $A$ is $\Sigma$-diagnosable.  Conversely, assume
  $\Sigma=\{a_1,\cdots,a_n\}$.  If $A$ is $\Sigma$-diagnosable then
  there is a mask $(M,|\Sigma|)$ with $M(a_i)=i$ \st $A$ is
  $(M,|\Sigma|)$-diagnosable.  Hence
  Problem~\ref{prob-static-mask}.(A) is PSPACE-complete.
  Problem~\ref{prob-static-mask}.(B) can be solved in PSPACE as well
  using a binary search. It is not difficult to reduce reachability
  for TA with one action to checking whether there is a mask of size
  $1$ and thus Problem~\ref{prob-static-mask}.(B) is PSPACE-complete.
\end{proof}
\begin{remark}
  The assumption that a mask is surjective can be lifted still
  preserving Theorem~\ref{thm-mask}. Indeed, if there is a mask
  $(M,|\Sigma|)$ \st $A$ is $(M,|\Sigma|)$-diagnosable and $M$ is not
  surjective, then we can build $(M',|\Sigma|)$ with $M'$ surjective
  \st $A$ is $(M',|\Sigma|)$-diagnosable (intuitively, $M'$ is more
  discriminating than $M$ and has a greater distinguishing power).
\end{remark}

\section{Sensor Minimization with Dynamic Observers}
\label{sec-dynamic}
The use of \emph{dynamic observers} was already advocated for DES
in~\cite{cassez-acsd-07,cassez-fi-08}.  We start with an example that
shows that dynamically choosing what to observe can be even more
efficient using timing information.

\begin{example}
  Let $\calA$ be the automaton of Figure~\ref{fig-ex-diag1}.  To
  diagnose $\calA$, we can use a \emph{dynamic observer} that
  switches $a$, $b$ and $c$-sensors on/off.  If we do not measure time,
  to be able to detect faults in $\calA$, we have to switch the $a$
  sensor on at the beginning. When an $a$ has occurred, we must be
  ready for either an $b$ or a $c$ and therefore, switch on the $b$
  and $c$ sensors on. A dynamic observer must thus first observe
  $\{a\}$ and after an occurrence of $a$, observe $\{b,c\}$.

  If the observer can measure time using a clock, say $y$, it can
  first switch the $a$ sensor on. If an $a$ occurs when $y \leq 2$,
  then switch the $b$ sensor on and if $y >2$ switch the $c$ sensor
  on. This way the observer never has to observe more than event at
  each point in time. \endex
\end{example}

\subsection{Dynamic Observers}
The choice of the events to observe can depend on the choices the
observer has made before and on the observations (event, time-stamp) it
has made. Moreover an observer may have \emph{unbounded} memory.  The
following definition extends the notion of observers introduced
in~\cite{cassez-acsd-07} to the timed setting.
 
\begin{definition}[Observer]\label{def-observer2}
  An \emph{observer} \obs over $\Sigma$ is a \emph{deterministic and
    complete} timed automaton $\obs=(N,n_0,Y,$
  $\Sigma,\delta,\inv_{\true})$ together with a mapping $O: N
  \rightarrow 2^\Sigma$, where $N$ is a (possibly infinite) set of
  locations, $n_0\in N$ is the initial location, $\Sigma$ is the set
  of observable events, $\delta : N \times \Sigma \times \calC(Y)
  \rightarrow N \times 2^Y$ is the transition function (a total
  function), and $O$ is a labeling function that specifies the set of
  events that the observer wishes to observe when it is at location
  $n$. The invariant\footnote{In the sequel, we omit the invariant
    when a TA is an observer, and replace it by the mapping $O$.}
  $\inv_{\true}$ maps every location to $\true$, implying that an
  observer cannot prevent time from elapsing. We require that, for any
  location $n$ and any $a\in\Sigma$, if $a\not\in O(n)$ then
  $\delta(n,a,\cdot)=(n,\varnothing)$: this means the observer does
  not change its location nor resets its clocks when an event it has
  chosen not to observe occurs.  \endef
\end{definition}
As an observer is deterministic we let $\delta(n_0,w)$ 
denote the state $(n,v)$ reached after reading the timed word $w$ and
$O(\delta(n_0,w))$ is the set of events $\obs$ observes after $w$.

\noindent An observer defines a {\em transducer} which is a mapping
$\sem{\obs} : \tw^*(\Sigma)\rightarrow \tw^*(\Sigma)$. Given a word
$w$, $\sem{\obs}(w)$ is the out\-put of the transducer on $w$.  It is
called the \emph{observation} of $w$ by the observer \obs.

\subsection{Diagnosability with Dynamic Observers}
\begin{definition}[$(\obs,\Delta)$-diagnoser] \label{def-obsk-diag}
  Let $A$ be a TA over $\Sigma_{\tauac,f}$ and \obs be an observer
  over $\Sigma$. $D:\tw^*(\Sigma) \rightarrow \{0,1\}$ is an
  \emph{$(\obs,\Delta)$-diagnoser} for $A$ if:
  \begin{itemize}
  \item $\forall \rho \in \nonfaulty(A)$,
    $D(\sem{\obs}(\trace(\rho)))=0$ and
  \item $\forall \rho \in \faulty_{\geq \Delta}(A)$,
    $D(\sem{\obs}(\trace(\rho)))=1$. \endef
  \end{itemize}
\end{definition}
$A$ is $(\obs,\Delta)$-diagnosable if there is an
$(\obs,\Delta)$-diagnoser for $A$. $A$ is \obs-diagnosable if there is
some $\Delta$ such that $A$ is $(\obs,\Delta)$-diagnosable.

We now show how to check $\obs$-diagnosability when the observer
$\obs$ is a DTA.
\begin{prob}[Deterministic Timed Automata Observers] \label{prob-dynamic-ta} \mbox{} \\
  \textsc{Inputs:} A TA $A=(L,\ell_0,X,\Sigma_{\tauac,f},E,\inv)$ and
  an observer given by a \dta
  $\obs=(N,n_0,Y,\Sigma,\delta,O)$. \\
  \textsc{Problem:}
  \begin{itemize}
  \item[(A)] Is $A$ $\obs$-diagnosable?
  \item[(B)] If the answer to~(A) is ``yes'', compute the minimum $\Delta \in \setN$
    \st $A$ is $(\obs,\Delta)$-diagnosable.
  \end{itemize}
\end{prob}

\begin{theorem}
  Problem~\ref{prob-dynamic-ta} is PSPACE-complete.
\end{theorem}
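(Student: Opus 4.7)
The plan is to reduce Problem~\ref{prob-dynamic-ta}(A) to the standard (static) diagnosability problem of Problem~\ref{prob-diag} via a product construction, and to handle (B) by a binary search on $\Delta$ using the PSPACE algorithm for $\Delta$-diagnosability. Since $\obs$ is a complete DTA, the product $B = A \times \obs$ is deterministic on the $\obs$ side, so at any reachable state $((\ell,n),v\cup u)$ of $B$ the observer location $n$ is uniquely determined, and hence $O(n)\subseteq \Sigma$ tells us exactly which events $\obs$ is currently watching. The idea is to encode this dynamic choice into the labelling of $B$ itself.

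More concretely, I would build a TA $B$ over clocks $X\cup Y$ with locations $L\times N$. For each transition $(\ell,g,a,r,\ell')\in E$ of $A$ with $a\in\Sigma$ and each location $n\in N$ of $\obs$, write $\delta(n,a,\cdot)=(n',r_Y)$ when the transition of $\obs$ on $a$ is enabled; I add a transition
\[
((\ell,n),\, g\wedge g_Y,\, \alpha,\, r\cup r_Y,\, (\ell',n'))
\]
where $\alpha = a$ if $a\in O(n)$ and $\alpha=\tauac$ otherwise. Transitions of $A$ labelled $\tauac$ or $f$ are copied verbatim, leaving $\obs$ in place. The invariants of $B$ are just those of $A$ (since $\obs$ has invariant $\true$). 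Fault status is inherited from $A$. By construction, every run $\varrho$ of $A$ lifts to a unique run $\varrho^\sharp$ of $B$ whose $\Sigma$-trace is exactly $\sem{\obs}(\trace(\varrho))$, and conversely. Hence
\[
A \text{ is }(\obs,\Delta)\text{-diagnosable} \iff B \text{ is }(\Sigma,\Delta)\text{-diagnosable},
\]
and similarly without the $\Delta$.

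For PSPACE-easiness of (A): $B$ has $|L|\cdot|N|$ locations and $|X|+|Y|$ clocks, so $|B|$ is polynomial in $|A|+|\obs|$. Diagnosability of a TA is in PSPACE by~\cite{tripakis-02}, hence $\obs$-diagnosability of $A$ is decidable in PSPACE. PSPACE-hardness follows immediately by taking $\obs$ to be a one-location DTA with $O(n_0)=\Sigma_o$ (a static observer), which reduces Problem~\ref{prob-diag} to Problem~\ref{prob-dynamic-ta}(A). For (B), once (A) answers ``yes'', $\Delta$-diagnosability is also PSPACE-complete, so a standard binary search over $\Delta$ bounded by a (singly exponential) number of regions of $B$ gives PSPACE for computing the minimum $\Delta$.

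The only subtle point I expect is to argue carefully that the lifting $\varrho\mapsto \varrho^\sharp$ preserves and reflects equality of $\Sigma$-projected traces: one must check that whenever two runs $\rho_1\in\faulty_{\geq\Delta}(A)$ and $\rho_2\in\nonfaulty(A)$ produce the same $\sem{\obs}$-observation, their lifts in $B$ have identical $\Sigma$-traces (and vice-versa), using condition~\eqref{eq-diagnos2} on both sides. Determinism and completeness of $\obs$ are essential here so that the observer state is a deterministic function of the trace of $A$; without it the product would introduce spurious non-determinism and the equivalence would break. Everything else is a routine reading-off of the diagnosability condition through the product.
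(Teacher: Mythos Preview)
Your proposal is correct and follows essentially the same route as the paper: the product $B$ you build is exactly the paper's $A\otimes\obs$, the equivalence $A$ is $(\obs,\Delta)$-diagnosable iff the product is $(\Sigma,\Delta)$-diagnosable is the paper's key claim, and both the hardness reduction (static observer with $O(n_0)=\Sigma_o$) and the binary search for~(B) match the paper's argument. Your additional remarks on why determinism and completeness of $\obs$ are needed for the lifting bijection are a welcome elaboration that the paper leaves implicit.
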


\begin{proof}
  PSPACE-hardness follows from the fact that taking an observer which
  always observes $\Sigma_o \subseteq \Sigma$ solves
  Problem~\ref{prob-diag}.  We prove that
  Problem~\ref{prob-dynamic-ta} is in PSPACE.  The following
  construction is an extension of the one for DES~\cite{cassez-fi-08}.
  Recall that $\obs$ is complete.  Define the timed automaton $A
  \otimes \obs=(L \times N,(\ell_0,n_0),X \cup
  Y,\Sigma_{\tauac,f},\rightarrow,\inv_\otimes)$ as follows:
  $\inv_\otimes(\ell,n)=\inv(\ell)$ and the transition relation
  $\rightarrow$ is given by:
\begin{itemize}
\item $(\ell,n) \xrightarrow{\,(g \wedge g',\beta,R \cup Y')\,}
  (\ell',n')$ iff $\exists \lambda \in \Sigma$ \st $\ell
  \xrightarrow{\,(g,\lambda,R)\,} \ell'$,
  $(n',Y')=\delta(n,\lambda,g')$ and $\beta=\lambda$ if $\lambda \in
  O(n)$, $\beta=\tauac$ otherwise;
\item $(\ell,n) \xrightarrow{\,(g,\lambda,R)\,} (\ell',n)$ iff 
  $\exists \lambda \in \{\tauac,f\}$ \st $\ell
  \xrightarrow{\,(g,\lambda,R)\,} \ell'$.
\end{itemize}
The TA $A \otimes \obs$ is an unfolding of $A$ which reveals what is
observable at each product location.

From the previous construction, it follows that: for each $\Delta \in
\setN$, $A$ is $(\obs,\Delta)$-diagnosable iff $A \otimes \obs$ is
$(\Sigma,\Delta)$-diagnosable.  As the size of $A \otimes \obs$ is
$|A| \times |\obs|$, we can solve Problem~\ref{prob-dynamic-ta}.(A) in
PSPACE.  Problem~\ref{prob-dynamic-ta}.(B) can also be solved using a
binary search, in PSPACE.
\end{proof}

\subsection{Synthesis of the Most Permissive Dynamic Diagnoser}
In this section we address the problem of \emph{synthesizing} a \dta
dynamic observer which ensures
diagnosability. Following~\cite{cassez-fi-08}, we want to compute a
\emph{most permissive} observer ($\varnothing$ if none exists), which
gives a representation of all the good observers.  Indeed, checking
whether there exists a DTA observer $\obs$ \st $A$ is
$\obs$-diagnosable is not an interesting problem: it suffices to check
that $A$ is $\Sigma$-diagnosable as the DTA observer which observes
$\Sigma$ continuously will be a solution.

When synthesizing (deterministic) timed automata, an important issue
is the amount of \emph{resources} the timed automaton can use: this
can be formally defined~\cite{BDMP-cav-2003} by the (number of) clocks, $Z$,
that the automaton can use, the maximal constant $\max$, and a
\emph{granularity} $\frac{1}{m}$. As an example, a TA of resource
$\mu=(\{c,d\},2,\frac{1}{3})$ can use two clocks, $c$ and $d$, and the
clocks constraints using the rationals $-2 \leq k/m \leq 2$ where $k
\in \setZ$ and $m=3$.  A \emph{resource} $\mu$ is thus a triple
$\mu=(Z,\max,\frac{1}{m})$ where $Z$ is finite set of clocks, $\max
\in \setN$ and $\frac{1}{m} \in \setQ_{>0}$ is the \emph{granularity}.
\dtamu is the class of \dta of resource $\mu$.
\begin{remark}
  Notice that the number of locations of the \dta in \dtamu is not
  bounded and hence this family has an infinite (yet countable) number
  of elements.
\end{remark}

We now focus on the following problem :
\begin{prob}[Most Permissive Dynamic $\Delta$-Diagnoser] \label{prob-dynamic-synth} \mbox{} \\
  \textsc{Inputs:} A TA $A=(L,\ell_0,X,\Sigma_{\tauac,f},E,\inv)$,
  $\Delta \in \setN$,
  and a resource $\mu=(Z,\max,\frac{1}{m})$.\\
  \textsc{Problem:} Compute the set $O$ of all observers in \dtamu,
  \st $A$ is $(\obs,\Delta)$-diagnosable iff $\obs \in O$.
\end{prob}
For DES, the previous problem can be solved by computing a most
permissive observer, and we refer to~\cite{cassez-fi-08} section~5.5
for the formal definition of the most permissive observer. This can be
done in 2EXPTIME~\cite{cassez-fi-08}, and the solution is a reduction
to a safety control problem under partial observation.  For the timed
case, we cannot use the same solution as controller synthesis under
partial observation is undecidable~\cite{BDMP-cav-2003}.  The solution
we present for Problem~\ref{prob-dynamic-synth} is a modification of
an algorithm originally introduced in~\cite{Bouyerfossacs05}. 

\subsection{Fault Diagnosis with DTA~\cite{Bouyerfossacs05}}\label{sec-algo}
In case a TA $A$ is $\Sigma_o$-diagnosable, the diagnoser is a
mapping~\cite{tripakis-02} which performs a state estimate of $A$
after a timed word $w$ is read by $A$.  For DES, it is obtained by
\emph{determinizing} the system, but we cannot always determinize a TA
$A$ (see~\cite{AlurDill94}).  And unfortunately testing whether a
timed automaton is determinizable is
undecidable~\cite{Finkel05,TripakisFolk}.

P.~Bouyer and F.~Chevalier in~\cite{Bouyerfossacs05} considers the
problem of deciding whether there exists a diagnoser which is a DTA
using resources in $\mu$:

\begin{prob}[\dtamu $\Delta$-Diagnoser~\cite{Bouyerfossacs05}] \label{prob-dtamu} \mbox{} \\
  \textsc{Inputs:} A TA $A=(L,\ell_0,X,\Sigma_{\tauac,f},E,\inv)$,
  $\Delta \in \setN$,
  and a resource $\mu=(Z,\max,\frac{1}{m})$.\\
  \textsc{Problem:} Is there any $D \in \text{DTA}_\mu$ \st $A$ is
  $(D,\Delta)$-dia\-gnosable ?
\end{prob}
\begin{theorem}[\cite{Bouyerfossacs05}]
Problem~\ref{prob-dtamu}  is 2EXPTIME-complete.
\end{theorem}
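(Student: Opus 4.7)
The plan is to follow the approach of Bouyer and Chevalier~\cite{Bouyerfossacs05}: reduce the existence of a $\Delta$-diagnoser in \dtamu to the existence of a winning strategy in a finite two-player safety game, and match the resulting upper bound with a 2EXPTIME-hardness reduction.

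For the upper bound, I would first build a region graph $\calR_\mu$ that jointly partitions the clocks $X$ of $A$ and a fresh copy of the resource clocks $Z$ of $\mu$, using the granularity $\tfrac{1}{m}$ and the constant $\max$ to delimit the regions over $Z$. Its size is singly exponential in the input, of order $|L|\cdot(|X|+|Z|)!\cdot 2^{|X|+|Z|}\cdot(m\cdot\max)^{|X|+|Z|}$. Invoking the characterization in Eq.~\eqref{eq-base}, $\Delta$-diagnosability amounts to separating $\Delta$-faulty and non-faulty traces, so I then square this graph, forming pairs $\bigl((\ell_1,r_1),(\ell_2,r_2)\bigr)$ that carry one shared copy of the region over $Z$ representing the candidate DTA's current state.

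On this squared arena I would set up a safety game. Player~1 (the Synthesizer) chooses, at each DTA-region and for each observable letter $a\in\Sigma$, a guard (a region over $Z$) and a reset subset, i.e., one outgoing edge of the DTA $D \in \text{DTA}_\mu$ being constructed. Player~2 resolves the nondeterminism of $A$ on both sides of the pair and fires either an $A$-transition or a time step. The bad states are the pairs in which the fault-age on the left component exceeds $\Delta$ while $D$'s choices still produce the same observation on both branches. A memoryless winning strategy for Player~1 avoiding the bad set is exactly a \dtamu that $\Delta$-diagnoses $A$, and a most permissive such strategy would also settle Problem~\ref{prob-dynamic-synth}.

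The main obstacle is forcing Player~1 to behave as a truly deterministic automaton: her choice must depend only on the DTA-region and the incoming letter, not on the hidden identity of the two $A$-branches, otherwise she would be cheating by observing more than $D$ can. The crucial point is that fixing the resource $\mu$ a priori bounds the visible memory of $D$, collapsing what would otherwise be an undecidable controller-synthesis-under-partial-observation problem into a full-observation safety game on an arena of single-exponential size; solving it in polynomial time in that arena yields the 2EXPTIME upper bound. For matching 2EXPTIME-hardness, I would reduce from a known 2EXPTIME-hard problem on timed automata with bounded resources, such as universality or an alternating reachability problem, by attaching a faulty/non-faulty gadget in front of the instance so that a \dtamu diagnoser exists if and only if the source instance is positive.
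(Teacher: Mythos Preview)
Your upper-bound argument has a genuine gap in the size/observation accounting. You build a paired region graph of singly-exponential size and then assert the resulting safety game is full-observation and solved in polynomial time, ``yielding the 2EXPTIME upper bound.'' But polynomial time on a single-exponential arena is EXPTIME, not 2EXPTIME, so either your arena is too small or your game is not full-observation. It is the latter: in your paired arena a state has the form $\bigl((\ell_1,r_1),(\ell_2,r_2),r_Z\bigr)$, yet the Synthesizer must pick the DTA edge based only on $r_Z$ and the incoming letter, never on the hidden $A$-components $(\ell_i,r_i)$. That is precisely a \emph{partial}-observation constraint. Your ``main obstacle'' paragraph names this correctly; the resolution you offer (``fixing $\mu$ collapses it to full observation'') is wrong. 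Fixing $\mu$ makes the family of candidate diagnosers finitely branching and the problem decidable, but it does not let the Synthesizer see the state of $A$.

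What you are missing is exactly the step the paper (following~\cite{Bouyerfossacs05}) makes explicit: after taking the region graph of $A(\Delta)\times\calU$ and projecting onto the $\calU$-visible alphabet, one \emph{determinizes} (subset construction, erasing $\tauac$), obtaining $H_{A,\Delta,\mu}$ whose states are \emph{sets} of region-graph states. This determinization is what absorbs the partial observation: Player~0 now sees the full belief set, and a memoryless winning strategy on $H_{A,\Delta,\mu}$ genuinely depends only on what a \dtamu automaton can know. The price is the second exponential, so the game arena is doubly exponential, the safety game on it is full-observation, and solving it in linear time gives the correct 2EXPTIME bound. Your pairing construction can be salvaged by treating it as a partial-observation safety game and applying the knowledge-subset construction, but then you recover the paper's determinization step verbatim. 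For the lower bound, the paper's reduction is from acceptance of exponential-space alternating Turing machines; your suggested source ``universality of TA'' is undecidable rather than 2EXPTIME-hard, so that part of your sketch also needs to be replaced by a concrete reduction.
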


The solution to the previous problem is based on the construction of a
\emph{two-player game}, the solution of which gives the \emph{set} of
all $\text{DTA}_\mu$ diagnosers (the most permissive diagnosers) which
can diagnose $A$ (or $\varnothing$ is there is none).

We recall here the construction of the two-player game.

Let $A=(L,\ell_0,X,\Sigma_{\tauac,f},\rightarrow,\inv)$ be a TA, $\Sigma_o
\subseteq \Sigma$.  Define $A(\Delta)=(L_1 \cup L_2 \cup
L_3,\ell^1_0,X \cup
\{z\},\Sigma_{\tauac,f},\rightarrow_\Delta,\inv_\Delta)$ as follows:
\begin{itemize}
\item $L_i=\{\ell^i, \ell \in L\}$, for $i\in \{1,2,3\}$, \ie $L_i$
  elements are copies of the locations in $L$,
\item $z$ is (new) clock not in $X$,
\item for $\ell \in L$, $\inv(\ell^1)=\inv(\ell)$,
  $\inv(\ell^2)=\inv(\ell) \wedge z \leq \Delta$, and 
  $\inv(\ell^3)=\true$,
\item the transition relation is given by:
  \begin{itemize}
  \item for $i \in \{1,2,3\}$, $\ell^i \xrightarrow{\ (g,a,R)\
    }_\Delta \ell'^i$ if $a \neq f$ and $\ell \xrightarrow{\ (g,a,R)\
    } \ell'$,
  \item for $i \in \{2,3\}$, $\ell^i \xrightarrow{\ (g,f,R)\ }_\Delta
    \ell'^i$ if $a \neq f$ and $\ell \xrightarrow{\ (g,f,R)\ } \ell'$,
  \item $\ell^1 \xrightarrow{\ (g,f,R \cup \{z\})\ }_\Delta
    \ell'^2$ if $a \neq f$ and $\ell \xrightarrow{\ (g,f,R)\ } \ell'$,
  \item $\ell^2 \xrightarrow{\ (z=\Delta,\tauac,\varnothing)\ }_\Delta
    \ell^3$.
  \end{itemize}
\end{itemize}
The previous construction creates $3$ copies of $A$: the system starts
in copy $1$, when a fault occurs it switches to copy $2$, resetting
the clock $z$, and when in copy $2$ (a fault has occurred) it can
switch to copy $3$ after $\Delta$ time units.  We can then define
$L_1$ as the non-faulty locations, and $L_3$ as the $\Delta$-faulty
locations.


Given a resource $\mu=(Y,\max,\frac{1}{m})$ ($X \cap Y =\emptyset$), a
\emph{minimal guard} for $\mu$ is a guard which defines a region of
granularity $\mu$. We define the (symbolic) \emph{universal automaton}
$\calU=(\{0\},\{0\},Y,\Sigma,E_\mu,\inv_\mu)$ by:
\begin{itemize}
\item $\inv_\mu(0)=\true$,
\item $(0,g,a,R,0) \in E_\mu$ for each $(g,a,R)$ \st $a \in \Sigma$,
  $R \subseteq Y$, and $g$ is a minimal guard for $\mu$.
\end{itemize}

$\calU$ is finite because $E_\mu$ is finite.  Nevertheless $\calU$ is
not deterministic because it can choose to reset different sets of
clocks $Y$ for a pair ``(guard, letter)'' $(g,a)$.  To diagnose $A$,
we have to find when a set of clocks has to be reset. This can provide
enough information to distinguish $\Delta$-faulty words from
non-faulty words.

The algorithm of~\cite{Bouyerfossacs05} requires the following steps:
\begin{enumerate}
\item define the region graph $\rg(A(\Delta) \times \calU)$,
\item compute a \emph{projection} of this region graph:
  \begin{itemize}
  \item let $(g,a,R)$ be a label of an edge in $\rg(A(\Delta)
    \times \calU)$,
  \item let $g'$ be the unique minimal guard \st $\sem{g} \subseteq
    \sem{g'}$;
  \item define the projection $p_\calU(g,a,R)$ by $(g',\lambda,R \cap
    Y)$ with $\lambda=a$ if $a \in \Sigma_o$ and
    $p_\calU(g,a,R)=\tauac$ otherwise.
  \end{itemize}
  The projected automaton $p_\calU(\rg(A(\Delta) \times \calU))$ is
  the automaton $\rg(A(\Delta) \times \calU)$ where each label
  $\alpha$ is repla\-ced by $p_\calU(\alpha)$.
\item determinize $p_\calU(\rg(A(\Delta) \times \calU))$ (removing
  $\tauac$ actions) and obtain $H_{A,\Delta,\mu}$,
\item build a two-player safety game $G_{A,\Delta,\mu}$ as follows:
  \begin{itemize}
  \item each transition $s  \xrightarrow{\ (g,a,Y) \ }  s'$
    %
    in $H_{A,\Delta,\mu}$ yi\-elds a transition in $G_{A,\Delta,\mu}$ of
    the form:

    \begin{center}
      \tikz[node distance=1cm and 2.8cm]{ \node[circle,draw] (l0)
        {$s$}; \node[rectangle,draw,right=of l0] (l2) {$(s,g,a)$};
        \node[circle,draw,right=of l2] (l1) {$s'$}; \path[->] (l0)
        edge node {$(g,a)$} (l2) ; \path[->] (l2) edge node
        {$(g,a,Y)$} (l1) ; }
    \end{center}

  \item the round-shaped state are the states of Player~1, whereas the
    square-shaped states are Player~0 states (the choice of the clocks
    to reset).
  \item the $\text{Bad}$ states (for Player~0) are the states of the
    form $\{(\ell_1,r_1),(\ell_2,r_2),\cdots,(\ell_k,r_k)\}$ with both
    a $\Delta$-faulty (in $L_3$) and a non-faulty (in $L_1$) location.
  \end{itemize}
\end{enumerate}
The main results of~\cite{Bouyerfossacs05} are:
\begin{itemize}
\item there is a TA $D \in$ \dtamu \st $A$ is $(D,\Delta)$-diagnosable
  iff Player~0 can win the safety game ``avoid Bad''
  $G_{A,\Delta,\mu}$,
\item it follows that Problem~\ref{prob-dtamu} can be solved in
  2EXPTIME as $G_{A,\Delta,\mu}$ has size doubly exponential in $A$,
  $\Delta$ and $\mu$,
\item the acceptance problem for Alternating Turing machines of exponential
  space can be reduced to Problem~\ref{prob-dtamu} and thus it
  is 2EXPTIME-hard.
\end{itemize}

\subsection{Problem~\ref{prob-dynamic-synth} is in 2EXPTIME} \label{sec-synth}
We now show how to modify the previous algorithm to solve
Problem~\ref{prob-dynamic-synth}, and obtain the following result:
\begin{theorem}
  Problem~\ref{prob-dynamic-synth} can be solved in 2EXPTIME.
\end{theorem}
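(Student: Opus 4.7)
The plan is to adapt the Bouyer--Chevalier construction recalled in section~\ref{sec-algo} by enriching the symbolic universal automaton $\calU$ so that it also non-deterministically guesses, at each step, which events the observer is currently listening to. In the original construction, $\calU$ encodes only the choice of clocks to reset (the non-determinism that determinization must resolve), and a fixed $\Sigma_o$ is baked into the projection $p_\calU$. For a dynamic observer, the set of currently observed events is itself a state-dependent decision of the observer. I would therefore replace $\calU$ by $\calU'=(\{0\},\{0\},Y,\Sigma,E'_\mu,\inv_\mu)$ where each edge carries an additional component $\Sigma_o'\subseteq\Sigma$, i.e.\ $(0,g,a,R,\Sigma_o',0)\in E'_\mu$ for every minimal guard $g$, every $R\subseteq Y$, every $a\in\Sigma$ and every $\Sigma_o'\subseteq\Sigma$. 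The projection is then redefined locally: along an edge guessing $\Sigma_o'$, the label $a$ is kept if $a\in\Sigma_o'$ and turned into $\tauac$ otherwise.

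Next I would push the same pipeline through: form $A(\Delta)\times\calU'$, build its region graph, apply the modified projection $p_{\calU'}$, determinize (eliminating $\tauac$), and build the two-player safety game $G'_{A,\Delta,\mu}$. The only change in the game graph is that a Player~0 state now chooses a pair $(R,\Sigma_o')\in 2^Y\times 2^\Sigma$ instead of just $R\in 2^Y$; the Bad set (macro-states containing both an $L_1$- and an $L_3$-component) is unchanged. A positional winning strategy for Player~0, which exists for safety games and is synthesizable in time linear in the game, can be read back as a DTA$_\mu$ observer $\obs$: locations are macro-states, clock resets come from the $R$-component, and the labelling function $O(\cdot)$ is the $\Sigma_o'$-component chosen by the strategy.

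The key correctness step is the two-way correspondence between winning strategies of Player~0 in $G'_{A,\Delta,\mu}$ and observers $\obs\in$\dtamu such that $A$ is $(\obs,\Delta)$-diagnosable. The ``strategy $\Rightarrow$ observer'' direction follows from exactly the same argument as in~\cite{Bouyerfossacs05}: a macro-state kept outside Bad is an over-approximation of the state estimate of $A$ after a given timed observation through $\obs$, so a non-faulty and a $\Delta$-faulty run can never share the same observation. The converse direction is the one I expect to be the main obstacle: given a witness DTA$_\mu$ dynamic observer $\obs$, one must exhibit a compatible Player~0 strategy. This requires showing that $\obs$'s choices of $(R,O(n))$ can always be expressed as choices along edges of $\calU'$, which in turn relies on the fact that DTA of resource $\mu$ are captured, up to bisimulation on observable behaviour, by a strategy in the region graph of $\calU'$; this is exactly where the granularity restriction $\mu$ is used and where Bouyer--Chevalier's completeness argument has to be revisited to also account for the additional $\Sigma_o'$ component.

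For the complexity bound, the enrichment multiplies the size of $\calU'$ by a factor $2^{|\Sigma|}$, so $|A(\Delta)\times\calU'|$ is still polynomial in $|A|,\Delta$ and exponential in $\mu$ and $|\Sigma|$. The region graph, projection and subset-construction determinization then yield a game $G'_{A,\Delta,\mu}$ of size doubly exponential in the inputs, exactly as in~\cite{Bouyerfossacs05}. Solving the safety game and extracting the most permissive winning strategy is linear in the game size, giving the claimed 2EXPTIME upper bound. The output of the algorithm is then the (finite symbolic) most permissive strategy, which by the correspondence above is precisely a representation of the set $O$ of all DTA$_\mu$ dynamic observers that make $A$ $(\obs,\Delta)$-diagnosable, possibly $\varnothing$ if Player~0 has no winning strategy.
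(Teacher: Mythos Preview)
Your proposal is correct and follows essentially the same route as the paper: enrich the universal automaton $\calU$ so that it additionally guesses the current observation set, redo the projection accordingly, and solve the resulting doubly-exponential safety game, with Player~0 now choosing both a reset set and a set $\Sigma_o'\subseteq\Sigma$. The only cosmetic difference is that the paper stores the guessed set in the \emph{locations} of $\calU'$ (taking the location set to be $2^\Sigma$) rather than on its edges as you do; this makes the current observation set part of every region-graph state and hence slightly streamlines the projection step and the reading-off of Player~0's choices, but the size bound and the correctness argument are identical.
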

\begin{proof}
  We modify the previous algorithm as follows:
  \begin{enumerate}
  \item the automaton $\calU$ is defined as follows: each location
    corresponds to a choice of a subset of events to observe. Define
    the (symbolic) \emph{universal automaton}
    $\calU'=(2^\Sigma,2^\Sigma,Y,\Sigma,E_\mu,\inv_\mu)$ by:
\begin{itemize}
\item for $S \in 2^\Sigma$, $\inv_\mu(S)=\true$,
\item $(S,g,a,R,S') \in E_\mu$ for each $S,S' \in 2^\Sigma$, $(g,a,R)$
  \st $a \in \Sigma$, $R \subseteq Y$, and $g$ is a minimal guard for
  $\mu$. 
\end{itemize}
\item when computing $\rg(A(\Delta) \times \calU'))$, the set of
  observable events (step~2 in the algorithm of
  section~\ref{sec-algo}) are defined according to the location $S$ of
  $\calU'$.  Formally, the projection of $a \in \Sigma$ is $a$ if the
  location of $\calU'$ is $S$ and $a \in S$ and $\tauac$ otherwise.
  \end{enumerate}
  The size of $\rg(A(\Delta) \times \calU'))$ is $|L| \cdot
  2^{|\Sigma|} \cdot |X \cup Y|! \cdot K^{|X \cup Y|}$ where $K$ is
  the maximal constant of $A \times \calU'$; it is thus exponential in
  $\mu$ and $\Sigma$.  The determinization is thus doubly exponential
  in $A$, $\mu$ and $\Sigma$. We can then build a new game
  $G'_{A,\Delta,\mu}$ as described in section~\ref{sec-algo} before.
  The proof that the most permissive strategy in the new game
  $G'_{A,\Delta,\mu}$ is the most permissive observer is along the
  lines of the one given in~\cite{Bouyerfossacs05} with minor
  modifications.
  Solving a safety game is linear in the size of the game and thus
  computing the most permissive observer of resource $\mu$ can de done
  in 2EXPTIME.
\end{proof}

\begin{remark}
  In~\cite{Bouyerfossacs05} it is also proved that for Event Recording
  Automata (ERA)~\cite{AFH94} Problem~\ref{prob-dtamu} becomes
  PSPACE-complete.  This result does not carry over in our case, as
  there is still an exponential step with the choice of the sets of
  events to be observed.
\end{remark}

\section{Optimal Dynamic Observers}\label{sec-cost}
In this section we extend the notion of \emph{cost} defined for finite
state observers in~\cite{cassez-fi-08} to the case of timed observers.

\subsection{Weighted/Priced Timed Automata}
Weighted/priced timed automata were introduced in~\cite{BFH+01,ATP01}
and they extend TA with \emph{prices/costs/weights} on the time
elapsing and discrete transitions.

\begin{definition}[Priced Timed Automata]
  A \emph{priced timed auto\-ma\-ton (PTA)} is a pair $(A,\Cost)$
  where $A=(L,\ell_0,X,$ $\Sigma_{\tauac,f},E,\inv)$ is a timed automaton
  and $\Cost$ is a \emph{cost function} which is a mapping from $L
  \cup E$ to $\setN$. \endef
\end{definition}
Let 
\begin{eqnarray*}
  \varrho & = & (\ell_0,v_0) \xrightarrow{\delta_0} (\ell_0,v_0 + \delta_0)
  \xrightarrow{a_0} (\ell_1,v_1) \cdots  \\ & & \cdots \xrightarrow{a_{n-1}} (\ell_n,v_n)
  \xrightarrow{\delta_n} (\ell_n,v_n+ \delta_n) 
\end{eqnarray*}
be a run of $A$. We denote by $e_i=(\ell_i,(g_i,a_i,R_i),\ell_{i+1})$
the discrete transition taken from $(\ell_i,v_i+\delta_i)$ to
$(\ell_{i+1},v_{i+1})$.

The \emph{cost} of the run $\varrho$ is defined by:
\[
\Cost(\varrho) = \Sigma_{i \in 0..n} \Cost(\ell_i) \cdot \delta_i + 
\Sigma_{i \in 0..n-1} \Cost(e_i)\mathpunct.
\]
The \emph{mean cost} of $\varrho$ is defined to be the cost per time
unit and given\footnote{Runs of duration $0$ are not taken into
  account.} by $\MeanCost(\varrho)=\Cost(\varrho)/\dur(\varrho)$.  The
cost of runs of duration $t \in \setR_{>0}$ is defined by
$\MeanCost(t)=\sup \{ \MeanCost(\sem{\obs}(\varrho)) \ | \
\dur(\varrho)=t \}\mathpunct.$ The \emph{maximal mean cost} of
$(A,\cost)$ is $\MeanCost(A)=\limsup_{t\rightarrow \infty}
\MeanCost(t)$.  The minimal mean cost is defined dually and denoted
$\underline{\cost}(A)$.

\subsection{Cost of an Observer}
To select a best or optimal dynamic observer which ensures
$\Delta$-diagnosability, we need to define a metric to compare them.
We extend the one defined in~\cite{cassez-fi-08} for DES to take into
account (real) time elapsing.

Let $A$ be a TA and $\obs$ a DTA observer.  $\obs$ is extended into a
P(D)TA by associating costs with locations and transitions.  The cost
associated with the discrete transitions is the cost of switching on
the sensors for a set of observable events, and the cost of a location
is the cost per time unit of having a set of sensors activated.

Let $\varrho$ be a run of $A$.  As \obs is deterministic (and complete)
there is exactly one run of $\obs$ the trace of which is
$\sem{\obs}(\trace(\varrho))$.  Given $\varrho$, let $\sem{\obs}(\varrho)$ be
this unique run. 
The average cost of the run $\varrho$  observed by \obs is
$\MeanCost(\sem{\obs}(\varrho))$.

Given $t \in \setR_{>0}$, the \emph{maximal mean cost} of runs of
duration $t$ is defined by:
\[
\MeanCost(A,\obs,t)=\sup_{\varrho \in \runs^*(A) \wedge \dur(\varrho)=t} \{ \MeanCost(\sem{\obs}(\varrho))\}\mathpunct.
\]
The \emph{maximal average cost} of the pair $<\!A,\obs\!>$ is defined
\[
\MeanCost(<\!A,\obs\!>)= \limsup_{t\rightarrow \infty} \MeanCost(A,\obs,t)
\mathpunct.
\]
We can then state the following problem:
\begin{prob}[Cost of an Observer] \label{prob-cost-obs} \mbox{} \\
  \textsc{Inputs:} A  TA $A$  and $(\obs,\Cost)$ a PDTA observer. \\
  \textsc{Problem:} Compute $\MeanCost(<\!A,\obs\!>)$.
\end{prob}

\subsection{Computing the Cost of a Given Timed Observer}
The computation of optimal infinite schedules for TA has been
addressed in~\cite{BBL-fmsd06}.  The main result of~\cite{BBL-fmsd06}  is:
\begin{theorem}[Minimal/Maximal Mean
  Cost~\cite{BBL-fmsd06}]\label{thm-pat}
  Given a PTA $A$, com\-puting $\MeanCost$ and $\underline{\cost}$ is
  PSPACE-complete.
\end{theorem}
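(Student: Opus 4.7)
The plan is to establish matching PSPACE upper and lower bounds. For PSPACE-hardness I would reduce from the location reachability problem for timed automata, which is itself PSPACE-complete. Given a TA $A$ and a target location $\ell_f$, I build a PTA $A'$ that mimics $A$ with all weights equal to $0$, except that at $\ell_f$ I add a delay-cost of $1$ per time unit via $\Cost(\ell_f)=1$ together with an internal self-loop ensuring time can elapse forever from $\ell_f$. Then $\ell_f$ is reachable in $A$ iff $\MeanCost(A')>0$, which handles the maximal case; taking negative weights (or dualising via $\Cost \mapsto K-\Cost$ for a suitable constant $K$) gives the analogous reduction for $\underline{\cost}$.

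For the PSPACE upper bound the tool is the \emph{corner-point abstraction} of Bouyer--Brinksma--Larsen. The cost accumulated along a run is affine in the delays $\delta_i$ chosen within each region crossed, and every region is a bounded convex polytope whose extremal values of any linear functional are attained at its integer corners. I would therefore define a finite weighted graph $\mathcal{C}(A)$ whose nodes are triples $(\ell, r, c)$ with $(\ell,r)$ a state of $\rg(A)$ and $c$ one of the corners of $r$; edges are of two kinds: discrete transitions inherited from $A$ (carrying the weight $\Cost(e)$) and delay transitions between adjacent corners of the same or time-successor region (carrying a weight computed from $\Cost(\ell)$ times the delay separating the two corners). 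A standard interchange argument between cornered sub-runs then shows that
\[
\MeanCost(A)=\max\{\text{mean weight of a cycle in }\mathcal{C}(A)\}
\]
and symmetrically for $\underline{\cost}(A)$, so computing both reduces to the maximum/minimum mean cycle problem on $\mathcal{C}(A)$.

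The main obstacle is that $\mathcal{C}(A)$ has size $|L|\cdot|X|!\cdot 2^{|X|}\cdot K^{|X|}$ times a factor for corners, which is exponential in the clocks and in the largest constant; Karp's mean-cycle algorithm applied naively yields only EXPTIME. To descend to PSPACE I would exploit that $\mathcal{C}(A)$ is \emph{succinctly presented}: each node has a polynomial-size description, and its outgoing edges together with their integer weights can be enumerated in polynomial space directly from $A$. The optimal mean value has polynomial bit-length, so by a binary search on a rational threshold $c$ the problem reduces to deciding whether $\mathcal{C}(A)$ admits a cycle of mean weight $\geq c$ (respectively $\leq c$). After subtracting $c$ from every edge weight this becomes detection of a non-negative (resp.\ non-positive) reachable cycle in a succinct weighted graph, which lies in NPSPACE: one guesses a starting node, then guesses successors one by one, maintaining only the current node and the running weight sum (of polynomial bit-length) in memory, and accepts when the starting node is re-encountered with a sum on the right side of $0$. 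Savitch's theorem collapses NPSPACE to PSPACE, matching the lower bound. The only subtlety I would flag explicitly is the treatment of Zeno cycles in $\mathcal{C}(A)$, which are discarded by restricting the search to cycles whose total delay is strictly positive, a condition that is again checkable in polynomial space alongside the weight sum.
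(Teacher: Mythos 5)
The paper does not prove this theorem: it is imported verbatim from the cited reference \cite{BBL-fmsd06}, so there is no in-paper proof to compare against. Your sketch nevertheless follows essentially the same route as that reference --- PSPACE-hardness by reduction from timed-automaton location reachability, and the upper bound via the corner-point abstraction together with an on-the-fly (NPSPACE $=$ PSPACE) mean-cycle threshold test on the succinctly represented graph --- so it is the ``same approach'' in the relevant sense. Two small caveats: your claim that every region is a bounded polytope with integer corners requires first replacing $A$ by a timed-bisimilar \emph{bounded} PTA (a point the paper itself makes right after invoking this theorem), and the ``standard interchange argument'' you wave at is in fact the technical heart of \cite{BBL-fmsd06} (approximating arbitrary non-Zeno runs by corner-point runs in both directions), so it cannot be dismissed as routine even though it is the correct key lemma to isolate.
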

The definition of the cost of an observer is exactly the defi\-ni\-tion
of the maximal mean cost in~\cite{BBL-fmsd06} and thus:
\begin{theorem}
  Problem~\ref{prob-cost-obs} is PSPACE-complete.
\end{theorem}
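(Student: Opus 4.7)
The plan is to reduce Problem~\ref{prob-cost-obs} to the maximal mean cost problem for PTA (Theorem~\ref{thm-pat}) and, conversely, to recover the latter as a special case, so that PSPACE-completeness transfers directly in both directions.

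\textbf{PSPACE upper bound.} I would lift the product $A \otimes \obs$ introduced in Section~\ref{sec-dynamic} into a priced timed automaton $(A \otimes \obs, \Cost_\otimes)$ by importing the cost function from $\obs$: for each product location $(\ell,n)$ set $\Cost_\otimes(\ell,n)=\Cost(n)$, the per-time-unit cost of having the sensors in $O(n)$ switched on; for each product transition that synchronizes with an observer step, inherit the discrete cost of that observer transition; and for $\tauac$ or $f$ moves of $A$ during which $\obs$ stays idle, set the discrete cost to $0$. Because $\obs$ is complete and deterministic, the map $\varrho \mapsto \sem{\obs}(\varrho)$ is well defined and duration-preserving, and each run of the product has the same mean cost as the unique observer run it carries. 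Hence
\[
\MeanCost(<\!A,\obs\!>) \;=\; \MeanCost(A \otimes \obs,\Cost_\otimes),
\]
and the right-hand side is computable in PSPACE by Theorem~\ref{thm-pat}, since $|A \otimes \obs| = |A|\cdot|\obs|$.

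\textbf{PSPACE hardness.} I would reduce from the maximal mean cost problem for PTA, which is PSPACE-hard by Theorem~\ref{thm-pat} and whose standard hardness proof already holds for deterministic and complete PTA. Given such a PTA $B$ over alphabet $\Sigma$, let $A$ be a single-location TA with a self-loop on each symbol of $\Sigma$, so that $\Trace(\runs^*(A)) = \tw^*(\Sigma)$, and let $\obs$ be $B$ equipped with $O(n) = \Sigma$ at every location $n$ (which makes the observer constraint on unobserved events vacuous). Since every timed word over $\Sigma$ is realized by some run of $A$ and $\obs$ is deterministic, the set $\{\sem{\obs}(\varrho) : \varrho \in \runs^*(A)\}$ coincides with the set of runs of $\obs$ viewed as a PTA, so $\MeanCost(<\!A,\obs\!>) = \MeanCost(B)$.

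\textbf{Main obstacle.} The delicate step is the cost-preserving product construction: attributing the correct discrete cost to each transition, distinguishing observer moves triggered by events in $O(n)$ from $A$-moves that leave $\obs$ idle, and making sure that time spent in location $(\ell,n)$ incurs exactly the rate $\Cost(n)$ declared by the observer. Once this bookkeeping matches the definition of $\MeanCost(<\!A,\obs\!>)$ verbatim, the bijection between runs of $A$ and runs of the priced product together with the equality of durations yields the equality of mean costs, and both directions reduce to Theorem~\ref{thm-pat}.
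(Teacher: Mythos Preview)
Your proposal is correct and follows essentially the same approach as the paper: for membership you reduce to Theorem~\ref{thm-pat} via the priced product $(A\otimes\obs,\Cost_\otimes)$ (the paper leaves this product implicit, merely remarking that the observer-cost definition ``is exactly'' the maximal mean cost of~\cite{BBL-fmsd06}), and for hardness you take $A$ to be the one-state universal automaton and $\obs=B$, which is precisely the paper's reduction.
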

\begin{proof}
  PSPACE-easiness follows from Theorem~\ref{thm-pat}: note that
  Theorem~\ref{thm-pat} assumes that the TA is bounded which is not a
  restriction as every TA can be transformed into an equivalent (timed
  bisimilar) bounded TA. For PSPACE-hardness, to compute the maximal
  mean cost of a PDTA $B$, let $A$ be the universal automaton on the
  alphabet of $B$.  Consider $B$ as an observer and solve
  Problem~\ref{prob-cost-obs}.  This solves the maximal mean cost
  computation problem for DTA.  This completes the hardness proof.
\end{proof}

\subsection{Optimal Synthesis Problem}
Checking whether the mean cost of a given observer is less than $k$
requires that we have computed or are given such an observer.  A more
difficult version of Problem~\ref{prob-cost-obs} is to check for the
existence of cheap dynamic observer:
\begin{prob}[Bounded Cost Dynamic Observer] \label{prob-bounded-cost} \mbox{} \\
  \textsc{Inputs:} A TA $A=(L,\ell_0,X,\Sigma_{\tauac,f},E,\inv)$,
  $\Delta \in \setN$,
  $\mu$ a resource and $k \in \setN$. \\
  \textsc{Problem:}
  \begin{itemize}
  \item[(A)] Is there a dynamic observer $D \in$ \dtamu \st $A$ is
    $(D,\Delta)$-diagnosable and $\MeanCost(<\!A,D\!>) \leq k$ ?
  \item[(B)] If the answer to~(A) is ``yes'', compute a witness
    dynamic observer?
  \end{itemize}
\end{prob}
We cannot provide of proof that Problem~\ref{prob-bounded-cost} is
decidable.  However, we give a lower bound for
Problem~\ref{prob-bounded-cost} and later discuss the exact
complexity.
\begin{theorem}
  Problem~\ref{prob-bounded-cost} is 2EXPTIME-hard.
\end{theorem}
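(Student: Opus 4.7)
The plan is to establish 2EXPTIME-hardness of Problem~\ref{prob-bounded-cost} by a polynomial-time reduction from Problem~\ref{prob-dtamu}, which is already known to be 2EXPTIME-complete by the Bouyer--Chevalier result cited in Section~\ref{sec-algo}. The idea is to leave the underlying instance essentially unchanged and to pick the cost bound $k$ large enough that the $\MeanCost$ constraint becomes vacuous, so that Problem~\ref{prob-bounded-cost} collapses to the pure observer-existence question answered by Problem~\ref{prob-dtamu}.

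Concretely, from an instance $(A,\Delta,\mu)$ of Problem~\ref{prob-dtamu} I would produce the instance $(A,\Delta,\mu,k)$ of Problem~\ref{prob-bounded-cost}, where $k$ is an a priori upper bound on the maximal mean cost of any \dtamu observer for $A$. Following the informal convention of Section~\ref{sec-cost} that the per-time-unit cost of a location $n$ depends only on the sensor set $O(n)\subseteq\Sigma$ and the switching cost depends only on the set of sensors toggled, both quantities are bounded by a value that depends only on $\Sigma$. For example, taking $\Cost(n)=|O(n)|$ and $\Cost(e)=0$, the bound $k=|\Sigma|$ already dominates $\MeanCost(<\!A,D\!>)$ for every observer $D$, and $k$ is trivially computable in polynomial time. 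With this choice the constraint $\MeanCost(<\!A,D\!>)\leq k$ is satisfied by \emph{every} \dtamu observer $D$, so $(A,\Delta,\mu,k)$ is a ``yes'' instance of Problem~\ref{prob-bounded-cost} iff some $D\in$ \dtamu makes $A$ $(D,\Delta)$-diagnosable, which is precisely Problem~\ref{prob-dtamu}. The 2EXPTIME-hardness of the latter therefore transfers to Problem~\ref{prob-bounded-cost}.

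The main obstacle is the somewhat implicit treatment of the observer cost function in the paper: the reduction, and in particular the polynomial-time choice of $k$, depends on what cost assignment is assumed. For any cost function whose values are uniformly bounded in the size of $A$ (such as the sensor-count cost above, or indeed any cost function referring only to $\Sigma$ and the sensor subsets), the argument goes through unchanged. If instead the cost function is allowed to grow with $A$, one must first derive a coarser worst-case bound on the mean cost before fixing $k$; this is still polynomial-time computable as long as the cost function itself is specified in polynomial size, so the reduction remains in polynomial time and the 2EXPTIME lower bound is preserved.
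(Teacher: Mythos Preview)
Your reduction has a genuine gap: it conflates two different objects that the paper keeps carefully apart. In Problem~\ref{prob-dtamu} the $D\in\text{\dtamu}$ is a \emph{diagnoser}, i.e.\ a deterministic timed automaton that implements the diagnosis mapping of Definition~\ref{def-diag} while observing all of $\Sigma$. In Problem~\ref{prob-bounded-cost} the $D\in\text{\dtamu}$ is an \emph{observer} in the sense of Definition~\ref{def-observer2}: it merely chooses, dynamically, which events to listen to, and $(D,\Delta)$-diagnosability (Definition~\ref{def-obsk-diag}) then only asks whether \emph{some} mapping $\tw^*(\Sigma)\to\{0,1\}$, with no implementability constraint, works on the filtered traces. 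Making $k$ large therefore collapses Problem~\ref{prob-bounded-cost} not to Problem~\ref{prob-dtamu} but to the bare question ``is there a \dtamu observer \obs\ such that $A$ is $(\obs,\Delta)$-diagnosable?''. The paper explicitly points out (beginning of Section~\ref{sec-dynamic}, just before Problem~\ref{prob-dynamic-synth}) that this question is uninteresting: the one-location observer with $O\equiv\Sigma$ lies in \dtamu for every $\mu$, so the answer is ``yes'' iff $A$ is $(\Sigma,\Delta)$-diagnosable, which is only PSPACE. Your argument thus yields PSPACE-hardness, not 2EXPTIME-hardness.

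By contrast, the paper's proof does \emph{not} trivialise the cost bound. It builds a gadget automaton $B$ over $\Sigma\cup\{\alpha\}$ in which any observer that fails to watch $\Sigma\cup\{\alpha\}$ continuously is immediately unable to separate faulty from non-faulty runs, so every successful observer must pay the full per-time-unit price; the bound $k=1$ then ties the existence of a cheap observer for $B$ back to the 2EXPTIME-hard Bouyer--Chevalier question for $A$. The essential point is that it is the \emph{smallness} of $k$, together with the gadget, that carries the hardness; choosing $k$ large enough to be vacuous destroys exactly the constraint that makes the problem difficult.
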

\begin{proof}
  We reduce Problem~\ref{prob-dtamu} which is
  2EXPTIME-hard~\cite{Bouyerfossacs05} to Problem~\ref{prob-bounded-cost}.
  Let $A$ be a TA for which we want to check whether there exists a
  DTA observer $D \in$ \dtamu \st $A$ is $(\Delta,D)$-diagnosable.

  Let $\alpha$ be a fresh letter not in $\Sigma$.  Define the
  automaton $B$ depicted on Figure~\ref{fig-reduc-bounded}.  The
  upper part of $B$ generates faulty and non-faulty runs with each
  letter including $\alpha$.  From each location of $A$ (bottom part),
  we add a $\tauac$ transition to the initial state of $B$.  The
  transitions of $A$ are not depicted.
  
  For $B$ to be diagnosable with $\Delta \geq 1$, we must have: 1)
  $\alpha$ always observable and 2) $\Sigma$ always observable.
  Moreover, if $A$ is $(\Delta,\Sigma)$-diagnosable, then $B$ is
  $(\Delta,\Sigma\cup\{\alpha\})$-diagnosable. Conversely, if $B$ is
  $(\Delta,\Sigma\cup\{\alpha\})$-diagnosable, then $B$ is
  $(\Delta,\Sigma)$-diagnosable.  Hence $A$ is
  $(\Delta,\Sigma)$-diagnosable iff $B$ is
  $(\Delta,\Sigma\cup\{\alpha\})$-diagnosable.

  Define the cost of the locations to be $1$, and $0$ for the
  transitions in $B$.  $B$ is diagnosable with a DTA $D \in$ \dtamu
  iff there is a dynamic (yet it has to choose $\Sigma\cup\{\alpha\}$
  continuously) observer $D$ with $\MeanCost(<\!A,D\!>) \leq 1$.
  
  It follows that: there exists a \dtamu diagnoser $D$ \st $A$ is
  $(\Delta,\Sigma)$-diagnosable iff $B$ is $(\Delta,O)$-diagnosable
  with a DTA observer $O \in$ \dtamu and $\MeanCost(<\!A,O\!>) \leq 1$.
\end{proof}
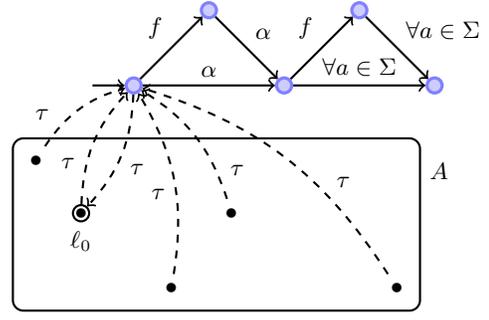
\begin{figure}[t]
  \centering
    \begin{tikzpicture}[thick,node distance=1cm and 1cm]%
    \small
    \tikzset{every state/.style={circle,minimum size=0.2cm,inner sep=0,draw=blue!50,very thick,fill=blue!20},bend angle=20}
    \node[state,initial] (l0) {}; 
    
    \node[state] (l1) [above right=of l0] {}; 
    \node[state] (l2) [right=of l0,xshift=1cm] {};
    \node[state] (l4) [above right=of l2] {};
    \node[state] (l5) [right=of l2,xshift=1cm] {};

    \node (s1) [below left=of l0,xshift=-.3cm] {$\bullet$};
    \node[circle,draw,inner sep=0] (s2) [right=of s1,yshift=-.7cm,xshift=-.4cm,label=-90:{$\ell_0$}] {$\bullet$};
    \node (s3) [below=of s2,xshift=1.2cm] {$\bullet$};
    \node (s4) [right=of s3,xshift=2cm] {$\bullet$};
    \node (s5) [right=of s2,xshift=1cm] {$\bullet$};
   \node[draw=black,inner sep=3pt,rounded corners,thick,fit= (s1) (s2) (s3) (s4) (s5),label=10:{$A$}] (env) {};

    \path[->] (l0) edge node[pos=0.5] {$f$} (l1) 
                edge node[swap,pos=0.5,swap]  {$\alpha$} (l2)
              (l1)      edge [pos=0.5] node  {$\alpha$} (l2)
              (l2) edge  node  {$f$} (l4)
              (l4) edge  node  {$\forall a \in \Sigma$} (l5)
              (l2) edge  [] node  {$\forall a \in \Sigma$} (l5);
    \path[->] (l0) edge[dashed,bend left] node {$\tauac$} (s2);
    \path[->] (s1) edge[dashed,bend left] node[pos=0.2] {$\tauac$} (l0);
    \path[->] (s2) edge[dashed,bend left] node[pos=0.2] {$\tauac$} (l0);
    \path[->] (s3) edge[dashed,bend right] node {$\tauac$} (l0);
    \path[->] (s4) edge[dashed,bend right,swap] node[pos=0.3] {$\tauac$} (l0);
    \path[->] (s5) edge[dashed,bend right,swap] node[pos=0.1] {$\tauac$} (l0);
    
  \end{tikzpicture}
  \caption{Automaton $B$}
  \label{fig-reduc-bounded}
\end{figure}
The status of Problem~\ref{prob-bounded-cost} is clearly unsettled as
the 2EXP\-TIME-hardness result does not imply it is even decidable.  A
solution to this problem would be to mimic the one given for
DES~\cite{cassez-fi-08}: solve a mean payoff \emph{timed} game with a
counterpart of Zwick and Paterson algorithm~\cite{zwick-95} using the
most permissive observers obtained in section~\ref{sec-synth}.  The
type of priced timed games we would have to solve has the following
features: 1) they are turn-based, as one Player picks up (controllable
moves) a set of events to be observed and then hands it over to the
other Player who tries to produce a confusing\footnote{Which is the
  trace of both a faulty and non-faulty run.} run (uncontrollable
moves); 2) they have at least two clocks (one for the system $A$ and
one for the DTA observer); 3) the controllable choices are
\emph{urgent} \ie no time can elapse in Player~1 locations.  We denote
S-PTGA for the class of timed game automata previously defined.

Unfortunately, there is no counterpart of the general result of Zwick
\& Paterson for timed automata.  Only very few results are known for
timed mean payoff
games~\cite{BLMR-fsttcs2006,bflms-formats08,bbjlr-formats08,BFLM-hscc10}
and none of them can be used in our setting.  Nevertheless, due to the
particular nature of the mean payoff price timed game we construct (in
the class S-PTGA), we might be able to compute the optimal choices of
observable events using an algorithm similar to~\cite{BBL-fmsd06}.
Hence we could obtain a 2EXPTIME algorithm for
Problem~\ref{prob-bounded-cost}.

\section{Conclusion}\label{sec-conclu}


The results of the paper are summarized by the line ``TA'' in
Table~\ref{tab-summary}.  The complexity/decidability status of
Problem~\ref{prob-bounded-cost} is left open.  A solution to this
problem would be to solve the following optimization problem on the
class of S-PTGA:
\begin{prob}[Optimal Infinite Schedule in S-PTGA] \mbox{}\\
  \textsc{Inputs:} A S-PTGA $(A,\cost)$, a set of \emph{Bad} states and $k \in \setN$.\\
  \textsc{Problem:} Is there a strategy $f$ for Player~1 in $A$ \st
  $f(A)$ ($A$ controlled by $f$) avoids \emph{Bad} and satisfies $\MeanCost(f(A)) \leq k$?
\end{prob}

\newcommand{\vtab}[1]{
  \begin{tabular}[c]{c}
    #1
  \end{tabular}
}
\begin{table}[t]
  \centering
  \caption{Summary of the Results}
  \label{tab-summary}
  \begin{tabular}[t]{||c|c|c|c||}\hline\hline
    &  Static Observers   & \multicolumn{2}{c||}{Dynamic Observers} \\\cline{3-4}
    &  Min. Cardinality   & Most Perm. Obs. & Optimal Observer \\\hline\hline
    DES & NP-Complete~\cite{cassez-acsd-07} & 2EXPTIME~\cite{cassez-acsd-07} & 2EXPTIME~\cite{cassez-tase-07}   \\\hline
    TA & PSPACE-Complete & 2EXPTIME  &  2EXPTIME-hard \\\hline\hline
  \end{tabular}
\end{table}

\bibliographystyle{IEEEtran}

\end{document}